\documentclass[11pt]{article}
\usepackage{amsthm}
\newtheorem{example}{Example}
\usepackage{acl}
\usepackage{times}
\usepackage{latexsym}
\usepackage{inconsolata}
\usepackage{amsthm}
\usepackage{placeins}
\usepackage{adjustbox}

\input{macros}  

\title{Pruning Minimal Reasoning Graphs for Efficient Retrieval-Augmented Generation}

\author{
Ning Wang\thanks{Equal contribution.}$^{1}$ \quad
Kuanyan Zhu\footnotemark[1]$^{2}$ \quad
Daniel Yuehwoon Yee\footnotemark[1]$^{3}$ \\
Yitang Gao$^{4}$ \quad
Shiying Huang$^{1}$ \quad
Zirun Xu$^{5}$ \quad
Sainyam Galhotra\thanks{Corresponding author.}$^{1}$ \\
\\
$^{1}$Cornell University \quad
$^{2}$University of Cambridge \quad
$^{3}$The University of Hong Kong \quad
$^{4}$HKUST \\
$^{5}$University of British Columbia \\
\\
\texttt{nw366@cornell.edu} \quad
\texttt{kz345@cam.ac.uk} \quad
\texttt{u3636035@connect.hku.hk} \quad
\texttt{sg@cs.cornell.edu} \quad
}

\begin{document}
\maketitle

\begin{abstract}
Retrieval-augmented generation (RAG) is now standard for knowledge-intensive LLM tasks, but most systems still treat every query as fresh, repeatedly re-retrieving long passages and re-reasoning from scratch, inflating tokens, latency, and cost. We present \textbf{AutoPrunedRetriever}, a graph-style RAG system that \emph{persists} the minimal reasoning subgraph built for earlier questions and \emph{incrementally} extends it for later ones. AutoPrunedRetriever stores entities and relations in a compact, ID-indexed codebook and represents questions, facts, and answers as edge sequences, enabling retrieval and prompting over symbolic structure instead of raw text. To keep the graph compact, we apply a two-layer consolidation policy (fast ANN/KNN alias detection plus selective $k$-means once a memory threshold is reached) and prune low-value structure, while prompts retain only overlap representatives and genuinely new evidence. We instantiate two front ends: \textsc{AutoPrunedRetriever-REBEL}, which uses REBEL~\cite{huguet-cabot-navigli-2021-rebel-relation} as a triplet parser, and \textsc{AutoPrunedRetriever-llm}, which swaps in an LLM extractor. On GraphRAG-Benchmark (Medical and Novel), both variants achieve \textbf{state-of-the-art complex reasoning accuracy}, improving over HippoRAG2~\cite{HippoRAG2} by roughly 9--11 points, and remain competitive on contextual summarize and generation. On our harder STEM and TV benchmarks, AutoPrunedRetriever again ranks first, while using up to two orders of magnitude fewer tokens than graph-heavy baselines, making it a practical substrate for long-running sessions, evolving corpora, and multi-agent pipelines.

\end{abstract}
\section{Introduction}

Retrieval-augmented generation (RAG) grounds LLMs in external knowledge, reducing hallucinations, enabling citation, and allowing updates without full retraining. Recent advances in dense retrieval and generation have yielded strong performance on open-domain question answering and tool-augmented assistants \citep{lewis2020rag,karpukhin-etal-2020-dense,izacard2022atlas,ram2023in}. However, moving from retrieving relevant text to solving complex, knowledge-intensive tasks still requires multi-hop reasoning: composing evidence across documents, enforcing temporal or structural constraints, and maintaining consistency across repeated or related queries.

In practice, most RAG systems treat each query independently. Even when multiple questions are closely related—or arise sequentially in agentic workflows, systems repeatedly re-retrieve overlapping passages and re-reason from scratch. This leads to substantial redundancy in retrieved context, inflated token usage, higher latency, and increased cost. These inefficiencies are especially pronounced in long-running sessions and multi-agent settings (e.g., planner–researcher–verifier pipelines), where similar reasoning chains are revisited many times~\citep{yao2023react,wu2024autogen}..


Graph-based RAG methods address some of these issues by lifting retrieval from flat text passages to structured representations over entities and relations~\citep{han2024graphrag,sun2022graphene,baek2023knowledge,wang2023graph}. By explicitly modeling compositional structure, GraphRAG systems improve multi-hop reasoning and disambiguation. Yet existing approaches still face three fundamental bottlenecks when deployed over evolving corpora and long reasoning chains.
We demonstrate these challenges with an example.

\begin{example}
Consider a small corpus containing five documents describing (i) a corporate acquisition, (ii) regulatory status under the EU Digital Services Act, (iii) GDPR incident histories, (iv) 2024 vendor contracts, and (v) subsidiary relationships (Fig. 1). From this corpus, we ask three related questions:\\
\textbf{Q1:} Which subsidiaries acquired after Jan. 1, 2021 are subject to the EU Digital Services Act?\\
\textbf{Q2:} For those subsidiaries, did GDPR incident rates decrease post-acquisition?\\
\textbf{Q3:} Which 2024 vendor contracts involve those same subsidiaries?
A standard GraphRAG pipeline constructs an entity–relation graph over the corpus and answers each query via neighborhood expansion. Even in this minimal setting, three limitations emerge.
\textbf{(M1) Graph construction and maintenance:} entity aliases and naming variants require global checks and relinking as new evidence arrives.
\textbf{(M2) Reasoning granularity:} neighborhood-based expansion retrieves broad subgraphs around central entities (e.g., the parent company or regulator), rather than the few edges that realize each reasoning chain.
\textbf{(M3) Redundant retrieval:} when Q1–Q3 are issued sequentially or by multiple agents, largely overlapping subgraphs are repeatedly retrieved and serialized, compounding token and latency costs.\\
Notably, the answers to Q2 and Q3 reuse much of the reasoning structure required for Q1. However, existing systems fail to exploit this overlap, repeatedly reconstructing context instead of persisting and extending prior reasoning.
\end{example}

These observations suggest a shift in perspective: retrieval should not aim to recover all potentially relevant context, but instead identify, cache, and reuse the minimal reasoning structure needed to answer a query, and incrementally extend that structure as new questions arrive.
These limitations motivate three design principles: local incremental structure, path-centric retrieval, and exact symbolic reuse, which guide the design of AutoPrunedRetriever.

\textbf{Our Approach.}
We introduce AutoPrunedRetriever, a structure-first RAG system that treats reasoning paths, rather than passages or neighborhoods, as the primary retrieval unit. AutoPrunedRetriever converts text into symbolic triples and represents questions, facts, and answers as compact sequences of entity–relation edges. The system persists only the minimal subgraphs that support successful reasoning and reuses them across later queries, avoiding repeated re-retrieval and re-prompting.

To keep memory compact and stable over time, AutoPrunedRetriever applies a two-layer consolidation policy: a lightweight, continuous alias-detection pass using approximate nearest neighbors, and a periodic, budget-triggered consolidation step that merges aliases and prunes low-value structure. Retrieval is explicitly path-centric, scoring candidate reasoning chains rather than expanding broad neighborhoods. Prompts are constructed from a compact symbolic codebook that includes only novel or non-redundant evidence, substantially reducing token usage while preserving grounding in source text.

We instantiate AutoPrunedRetriever with two front ends: AutoPrunedRetriever-REBEL, which uses a fixed triplet parser, and AutoPrunedRetriever-LLM, which replaces it with an LLM-based extractor. Across the GraphRAG benchmark as well as harder STEM and TV reasoning datasets, both variants achieve state-of-the-art complex reasoning accuracy while using up to two orders of magnitude fewer tokens than graph-heavy baselines. These results demonstrate that pruned, persistent reasoning structure, not larger graphs or longer prompts, is the key substrate for efficient, long-running, and agentic RAG systems.

\subsection{Design Principles}
The design of AutoPrunedRetriever is guided by three principles, each directly addressing one of the limitations illustrated in Example 1.

P1. Local, incremental structure (addresses M1).
To avoid the cost and brittleness of global graph maintenance, AutoPrunedRetriever builds reasoning structure locally and incrementally. Text is encoded into symbolic triples and grouped into small, coherent graphs that can be extended over time. Entity consolidation is applied selectively at the symbol level, allowing aliases to be merged without re-extracting text or relinking global structure.

P2. Path-centric retrieval (addresses M2).
Reasoning is realized by short chains of entities and relations, not broad neighborhoods. AutoPrunedRetriever therefore treats edge sequences as the primary retrieval unit and scores candidate reasoning paths directly, avoiding the retrieval of large subgraphs that do not contribute to the required inference.

P3. Exact symbolic reuse (addresses M3).
To prevent repeated serialization of overlapping context, reuse across queries is exact and symbolic rather than textual. AutoPrunedRetriever caches reasoning subgraphs as compact sequences of entity–relation identifiers and constructs prompts that include only novel or non-redundant evidence, ensuring that token usage scales with new reasoning rather than repeated context.

We discuss more details about these principles in Section~\ref{sec:method}.
\section{Method}
\label{sec:method}

\begin{figure}[t]
  \centering
  \includegraphics[width=\columnwidth]{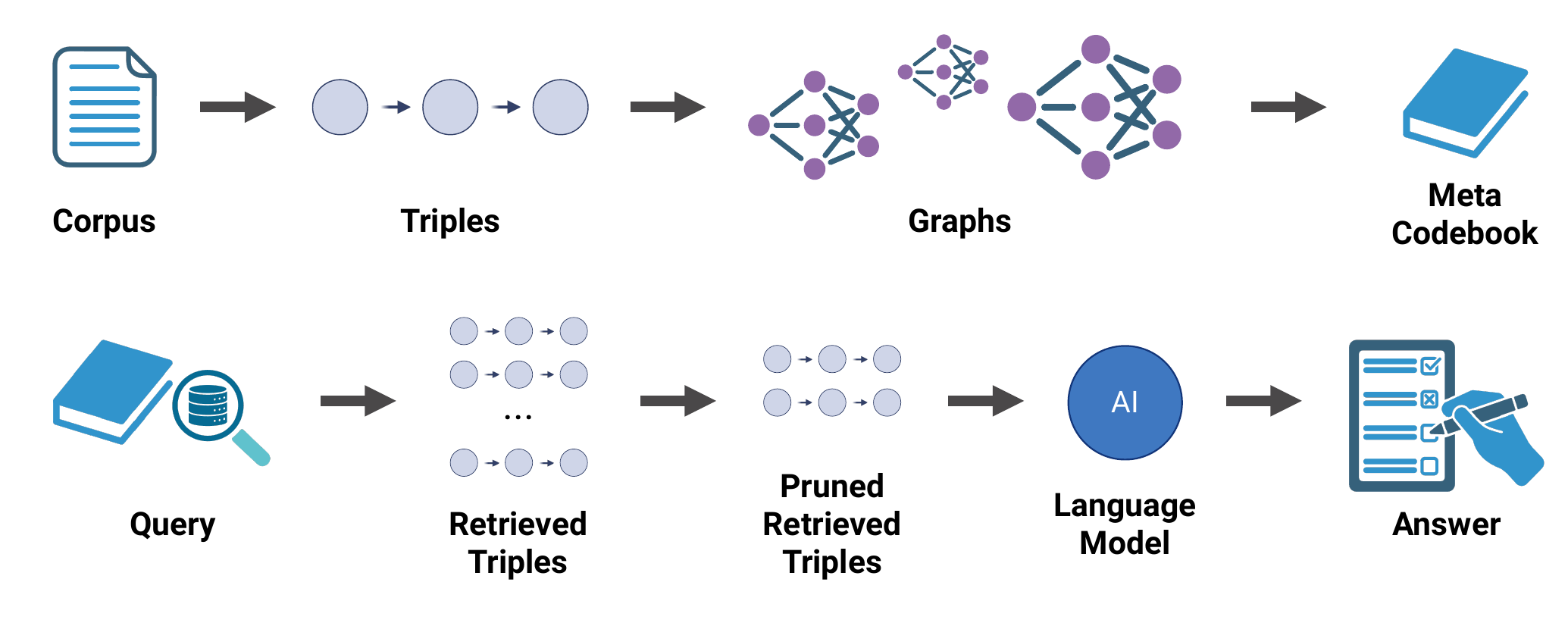}
  \caption{\textbf{AutoPrunedRetriever} pipeline:
  (1) encode into symbols and edges, (2) build chunked small graphs,
  (3) coarse$\rightarrow$fine retrieval, (4) selector + compact prompt
  packing, (5) entity-only consolidation with a DPO wrapper to trade
  accuracy vs.\ tokens.}
  \label{fig:workflow}
  \vspace{-20pt}
\end{figure}

\subsection{Overview}
Free text is a noisy, length-biased substrate for reasoning: it repeats the same facts in many surface forms, conflates content with string realization, and penalizes reuse by charging tokens repeatedly for identical information. These properties make it ill-suited for persistent, multi-hop reasoning across related queries.

AutoPrunedRetriever addresses these issues with a symbol-first pipeline that implements the three design principles introduced in Section 1.1. Specifically, we:
(1) encode questions, answers, and facts into a shared symbolic codebook of entities and relations (Sec.~\ref{sec:encoding});
(2) build local, coherent reasoning graphs that serve as retrieval atoms (Sec.~\ref{sec:chunks});
(3) retrieve paths rather than neighborhoods via coarse-to-fine scoring in symbol space (Sec.~\ref{sec:retrieve});
(4) select and package only non-redundant structure into compact prompts (Secs.~\ref{sec:selection-prompt},~\ref{sec:compact-ind}); and
(5) consolidate entities incrementally to keep the persistent graph compact over time (Sec.~\ref{sec:consolidation}). Finally, we adapt compression behavior to accuracy–efficiency tradeoffs using a lightweight DPO-trained policy (Sec.~\ref{sec:dpo}).

Each component is designed to ensure that retrieval, reasoning, and prompting scale with new reasoning rather than repeated context.

\vspace{0.4em}
\subsection{Step 1: Symbolic Encoding}
\label{sec:encoding}

\paragraph{Intuition.}
We normalize free-form text into a small set of symbols and the edges they instantiate.
This exposes shared structure across questions and facts and makes reuse exact via IDs rather than brittle string matches.
Because natural language is heavy-tailed, most informational mass is carried by a small “core” vocabulary; encoding those into a codebook yields large compression gains (see Lemma~\ref{lem:heavy-tail}). In App.~\ref{sec:token-length-er-factorization} (Lemma~\ref{lem:token-dilution}, Proposition~\ref{prop:ere-advantages}) also shows that such E--R--E codebooks mitigate the ``token dilution'' effect of long sequence embeddings.

\paragraph{Formulation.}
For a text span \(y\) (question, answer, fact), a parser (REBEL or an LLM) produces triples
\[
  \tau(y)\subseteq \mathcal{U}_E \times \mathcal{U}_R \times \mathcal{U}_E.
\]
We maintain a meta-codebook
\[
  \mathcal{C} = (E,R,\mathsf{M},\mathsf{Q},\mathsf{A},\mathsf{F},E_{\mathrm{emb}},R_{\mathrm{emb}}),
\]
where \(E,R\) are entity and relation dictionaries,  
\(\mathsf{M}\subseteq E\times R\times E\) is the sparse set of unique edges, and
\(\mathsf{Q},\mathsf{A},\mathsf{F}\) store sequences of edge IDs for questions, answers, and facts.

An \textsc{Indexify} map
\[
  \mathbf{y} = \textsc{Indexify}\!\big(\tau(y);E,R,\mathsf{M}\big)\in \mathsf{M}^{\star}
\]
serializes text into edge indices, extending \((E,R,\mathsf{M})\) only when new symbols/edges appear.
We append \(\mathbf{y}\) to the appropriate store in \(\mathsf{Q},\mathsf{A},\mathsf{F}\).
The embedding tables \(E_{\mathrm{emb}}:E\to\mathbb{R}^d\) and \(R_{\mathrm{emb}}:R\to\mathbb{R}^d\) give us E--R--E embeddings for each edge \((e,r,e')\in\mathsf{M}\).
\vspace{0.4em}
\subsection{Step 2: Chunked Small Graphs (Local-First Construction)}
\label{sec:chunks}

\paragraph{Intuition.}
Instead of inserting every triple into a global graph, we build \emph{small, locally coherent graphs} in a single pass.
The question is no longer “where in the global graph does this triple live?” but “does this triple \emph{fit} the current small graph?”  
This keeps working sets small and updates local, while preserving global consistency via shared IDs in \(\mathsf{M}\).

\paragraph{Runs and modalities.}
For each modality \(y\in\{\mathsf{Q},\mathsf{A},\mathsf{F}\}\) we build a run repository
\[
  \mathsf{Runs}(y) \in \big(\mathsf{M}^{\star}\big)^{\star},
\]
a list of edge-ID sequences (runs) that correspond to small graphs.

\paragraph{Streaming construction.}
We maintain a current small graph \(G_k=(V_k,E_k)\) with an embedding centroid.
For each incoming triple, we compute a \emph{fit score} that combines:

1. \textbf{Semantic cohesion}: cosine similarity between the triple embedding and the centroid of \(G_k\).

2. \textbf{Structural continuity}: a bonus when the triple continues a path (e.g., tail\(_{\text{prev}}\)=head\(_{\text{new}}\)) or reuses nodes in \(V_k\).

If the (bonus-adjusted) score exceeds a threshold \(\tau\), we append to \(G_k\); otherwise we close \(G_k\), linearize its edges into \(\mathbf{g}_k\in\mathsf{M}^{\star}\), store \(\mathbf{g}_k\in\mathsf{Runs}(y)\), and start \(G_{k+1}\).
This yields maximal locally coherent segments (Lemma~\ref{lem:maximal-coherent}).

\paragraph{Boundary refinement.}
A single pass can over-cut near boundaries, so we run a local \emph{merge-if-not-a-true-cut} test on adjacent runs:
we re-embed the concatenation of the boundary region, re-run the same segmenter, and merge if the new segmentation does \emph{not} place a cut at the original boundary.
Surviving boundaries are self-consistent fixed points of the segmenter (Lemma~\ref{lem:merge-correct}).

The result is a sequence of compact, coherent runs that serve as retrieval atoms.
We quantify their intra-run cohesion and the induced working-set reduction for retrieval in Lemmas~\ref{lem:cohesion} and Lemmas~\ref{lem:retrieval}.

\vspace{0.4em}
\subsection{Step 3: Coarse to Fine Path Retrieval}
\label{sec:retrieve}

\paragraph{Intuition.}
Naively embedding every query against every run is slow and favors long, noisy chunks. We instead \emph{first} work in symbol space to get a small shortlist and \emph{then} do a more detailed triple-level check on that shortlist. This two-layer scheme keeps cost near \(O(k)\) rather than \(O(n)\) while preserving precision (Lemma~\ref{lem:coarse_fine_efficiency}).

\paragraph{Coarse stage (symbol-space recall).}
Each run decodes to triples \((h,\rho,t)\). We pool entity embeddings into \(E(\cdot)\) and relation embeddings into \(R(\cdot)\). For a query \(q\) and candidate run \(f\) we compute a simple \emph{max-pair} score over entities and relations,
\[
\begin{aligned}
s_{\text{coarse}}(q,f)
&= w_{\text{ent}}\max_{i,j}\cos\!\big(E(q)_i,E(f)_j\big)\\
&\quad + w_{\text{rel}}\max_{p,r}\cos\!\big(R(q)_p,R(f)_r\big),
\end{aligned}
\]
and keep the top-$k$ runs as a high-recall shortlist \(I_k\). This stage only touches small entity/relation sets and is therefore very fast.

\paragraph{Fine stage (triple-aware re-ranking).}
For each candidate in \(I_k\), we linearize triples (\(h~\rho~t\)) into short lines, embed them, and build a similarity matrix \(\mathbf{S}\) between query and candidate lines. The final score is a weighted sum of five simple terms computed on \(\mathbf{S}\): a top-$t$ mean over the best entries (relational strength), a coverage term counting how many query triples are well matched, a soft many-to-many overlap term, a greedy 1:1 match encouraging parsimonious alignment, and a small whole-chunk bonus gated by full-text similarity to avoid “longer is better”. We then take the global \(\operatorname{TopM}\) runs across all queries and channels (answers / facts / prior questions). Exact formulas and hyperparameters are deferred to App.~\ref{app:retrieve-details}.

\vspace{0.4em}
\subsection{Step 4: Knowledge Selection}
\label{sec:selection-prompt}

\paragraph{Intuition.}
Real corpora have heavy-tailed reuse of motifs, so naive “pull everything similar” either misses paraphrased support or floods the prompt with near-duplicates.
We therefore operate on runs and, for each channel (answers, facts, related questions), choose an action
\[
  s\in\{\texttt{include all},\texttt{unique},\texttt{not include}\}.
\]
Here \texttt{include all} keeps all surviving runs (e.g., for safety/audit regimes), \texttt{unique} keeps one representative per semantic cluster to avoid echoing and token bloat, and \texttt{not include} drops the channel when it adds little beyond context.
Semantic clusters are defined in run-embedding space; we pick a consensus representative per cluster, and show in Lemma~\ref{lem:consensus} that this representative stays close to paraphrastic variants.

\subsection{Step 5: Compact  prompt Construction}
\label{sec:compact-ind}

Given the selected runs, we assemble the minimal symbolic state the LLM needs:
\[
\scalebox{0.9}{$
\begin{aligned}
\mathsf{U} &= \mathbf{q}\cup \mathbf{a}\cup \mathbf{f},\\
E' &= \{h:t:(h,\rho,t)\in \mathsf{U}\},\\
R' &= \{\rho:(h,\rho,t)\in \mathsf{U}\}.
\end{aligned}
$}
\]

We maintain two equivalent encodings and choose the cheaper at query time:
(1) \emph{word triples}, which list \((h,\rho,t)\) directly for low-redundancy regimes; and
(2) \emph{compact indices}, which map \(E',R'\) to short IDs and represent query/answer/fact sequences as ID triples.
The prompt payload is
\(\Pi = (E',R',\mathbf{q},\mathbf{a},\mathbf{f},\texttt{rules})\),
with a brief textual header explaining the ID format.
Token cost scales with \(|E'|+|R'|+|\mathbf{q}|+|\mathbf{a}|+|\mathbf{f}|\), typically far below concatenated passages.

\vspace{0.4em}
\subsection{Step 6: Entity-Only Consolidation}
\label{sec:consolidation}

\paragraph{Intuition.}
Long-running deployments accumulate aliases, misspellings, and near-duplicates (\emph{IBM} vs.\ \emph{International Business Machines}).
We consolidate \emph{entities only}, then remap all edges and sequences.

$\bullet$ \textbf{Layer 1 (ANN+KNN).}  
  Continuously build an ANN-backed $k$-NN graph over entity embeddings; connect pairs with cosine above a conservative threshold \(\tau_E\) and form provisional alias groups.

$\bullet$ \textbf{Layer 2 (on-demand $k$-means).}  
  When memory or \(|E|\) exceeds a budget, refine groups with $k$-means and choose medoid representatives $m_E(\cdot)$, which minimize within-cluster distortion (Lemma~\ref{lem:medoid}).
  Each edge \((u,r,v)\) is remapped to \((m_E(u),r,m_E(v))\), and duplicates are removed.

Because questions/answers/facts are stored as edge-ID sequences, this remap
automatically cleans them as well. This quotienting can only reduce edge and
sequence cardinalities (Lemma~\ref{lem:quotient}) and does not increase the number of sentence-level text encodings (Lemma~\ref{lem:vec-cost}).

\subsection{Step 7: Adaptive Compression via DPO}
\label{sec:dpo}

The “right” selector choice depends on the query (ambiguity, hops), model (context length, robustness), domain (redundancy), and user goals (accuracy vs.\ latency/tokens).
We learn a small categorical policy \(\pi_\theta\) over the selector actions per channel, conditioned on features such as query length, ambiguity scores, model ID, and token budget.

Offline, for each query we evaluate several selector configurations \(y\) and compute a utility
\[
  U(x,y) = \alpha \cdot \text{Acc} + \delta \cdot \text{Faithfulness}
           -  
\]
\[
\beta \cdot \text{Tokens} -\gamma \cdot \text{Latency}.
\]
We form preference pairs \((x,y^+,y^-)\) whenever \(U(x,y^+) > U(x,y^-)\), and train \(\pi_\theta(y\mid x)\) with DPO against a fixed reference policy.
Under a Bradley–Terry preference model, DPO aligns policy log-odds with utility differences up to a scaling and reference correction (Lemma~\ref{lem:dpo-utility}), and a simple action lattice yields monotone token control under a budget constraint (Lemma~\ref{lem:token-budget}).

At inference time, the policy picks (\texttt{include all}, \texttt{unique}, or \texttt{not include}) per channel, steering AutoPrunedRetriever to the appropriate operating point—e.g., overlap-heavy scaffolds for ambiguous multi-hop questions vs.\ aggressive deduplication under tight budgets—while retaining the same symbolic infrastructure.

\renewcommand{\arraystretch}{1.25}
\setlength{\tabcolsep}{4pt}
\definecolor{lightgray}{gray}{0.93}

\section{Experiments}
\label{sec:exp}

We study:
\begin{itemize}
  \item \textbf{RQ1} (\S\ref{subsec:full-complex-reasoning}): complex reasoning performance on Medical, Novel, STEM, TV.
  \item \textbf{RQ2} (\S\ref{subsec:efficiency-stem-tv}): efficiency (tokens, latency, workspace) on STEM/TV.
  \item \textbf{RQ3} (\S\ref{subsec:graphrag-benchmark}): overall performance on the full GraphRAG benchmark.
\end{itemize}

\subsection{Experimental Settings}

\noindent\textbf{Devices and models.}
GraphRAG experiments (\S\ref{subsec:graphrag-benchmark}) run on an Intel i9-13900KF, 64~GB RAM, RTX~4090 (24~GB); STEM/TV experiments (\S\ref{subsec:full-complex-reasoning}, \S\ref{subsec:efficiency-stem-tv}) use an A100 (40~GB).
All LLM-backed parsing/judging uses the \texttt{gpt-4o-mini} API.

\noindent\textbf{Datasets.}
\textbf{Medical} and \textbf{Novel} are from the GraphRAG-Benchmark~\cite{xiang2025when-graphs-in-rag} with Fact Retrieval, Complex Reasoning, Contextual Summarize, and Creative Generation.
We additionally construct \textbf{TV} and \textbf{STEM} from the HotpotQA Wikipedia pipeline~\cite{hotpotqa-wiki}, grouped into 47 TV and 32 STEM micro-corpora; TV questions focus on character/episode relations, STEM on cross-sentence scientific inference.

\noindent\textbf{Evaluation and parsers.}
We follow the LLM-judge protocol of Xiang et al.~\cite{xiang2025when-graphs-in-rag}.
\textsc{AutoPrunedRetriever} is evaluated with two front ends: a REBEL-based triplet parser~\cite{huguet-cabot-navigli-2021-rebel-relation} and an LLM-based parser (\texttt{gpt-4o-mini}); both use the same pruning and indices-only prompting pipeline.

\subsection{Full Complex-Reasoning Evaluation}
\label{subsec:full-complex-reasoning}

We aggregate all complex-reasoning sources:
\textbf{Medical-CR}, \textbf{Novel-CR}, \textbf{STEM}, \textbf{TV}, to test a single pruned, symbolic pipeline across technical, narrative, and pop-culture reasoning.

\noindent\textbf{Quantitative results.}
Across all four sets, \textsc{AutoPrunedRetriever} is consistently strongest (Fig.~\ref{fig:accr}).
On \textbf{Medical-CR} and \textbf{Novel-CR}, the REBEL variant reaches \textbf{72.49\%} and \textbf{63.02\%} ACC vs.\ \textsc{HippoRAG2} (61.98\%, 53.38\%), i.e., +10.51/+9.64 points.
The LLM-parser variant is close (71.59\%, 62.80\%), indicating that the gain mainly comes from symbolic pruning and retrieval.
On \textbf{STEM}, REBEL/LLM obtain \textbf{81.4\%}/78.1\% vs.\ \textsc{HippoRAG2} (69.9\%); on \textbf{TV}, \textbf{68.2\%}/65.2\% vs.\ \textsc{HippoRAG2} (59.5\%).
The ordering is the same on all four: APR-REBEL $>$ APR-llm $>$ HippoRAG2 $>$ LightRAG.

\begin{figure}[h]
  \centering
  \includegraphics[width=\columnwidth]{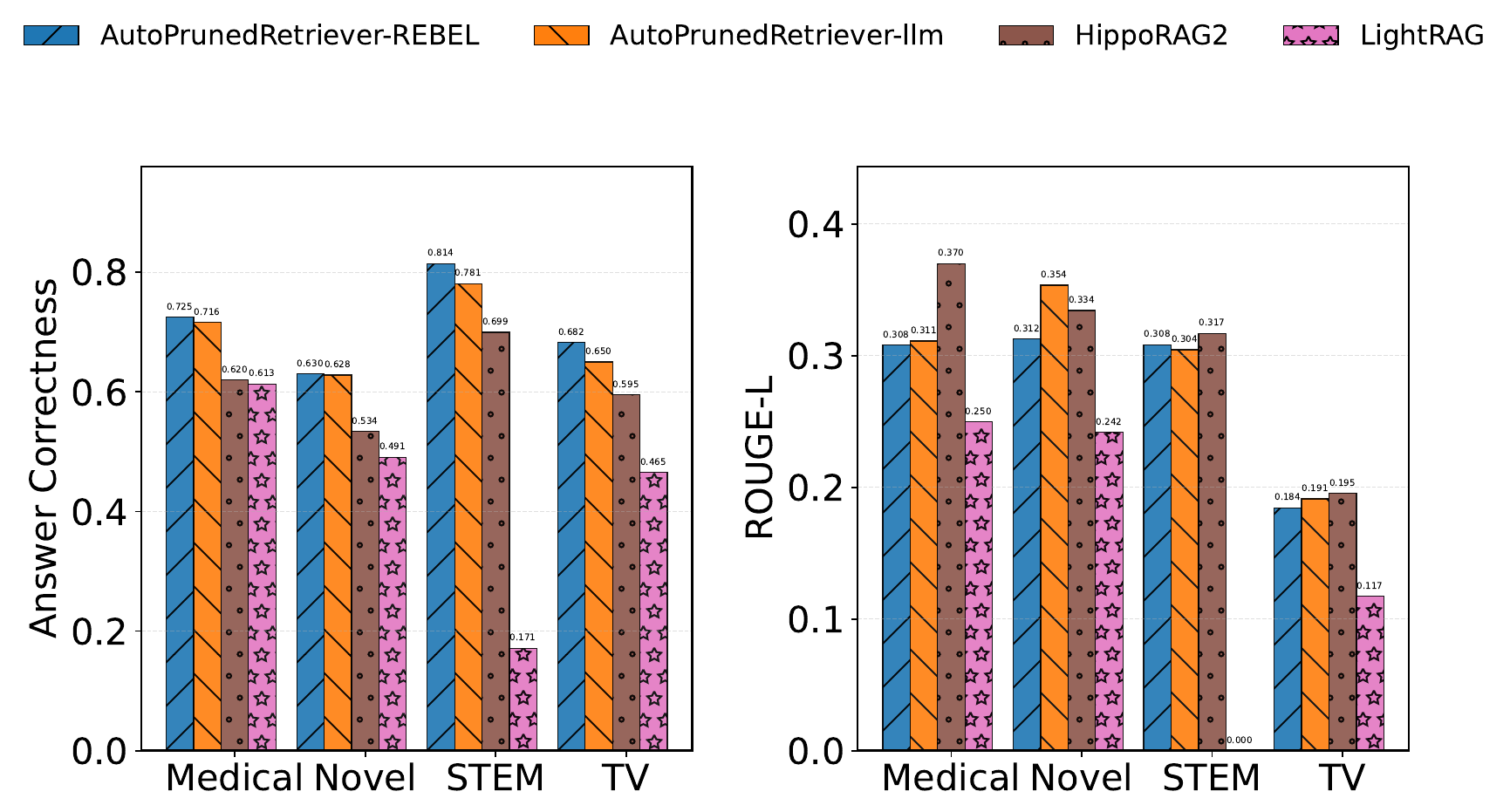}
  \caption{Average answer correctness on all complex-reasoning sets (Medical-CR, Novel-CR, STEM, TV) for HippoRAG2, LightRAG, AutoPrunedRetriever-REBEL, AutoPrunedRetriever-llm.}
  \label{fig:accr}
\end{figure}

\noindent\textbf{Case study (STEM): reasoning over ecological chains.}
Consider the STEM question
\emph{“How does the variability in the size of brown bears across different regions serve as evidence for understanding their adaptability in various environments?”}
In our run this was a question where \textsc{AutoPrunedRetriever} beat both \textsc{HippoRAG2} and \textsc{LightRAG}.
What our retriever actually surfaced was a \emph{compact symbolic subgraph} centered on three functional edges:
(1) region $\rightarrow$ resource availability,
(2) resources $\rightarrow$ body size, and
(3) body size $\rightarrow$ environmental adaptability.
Because those three edges were present together, the LLM could reconstruct the full causal chain:
\[
\text{region} \Rightarrow \text{food} \Rightarrow \text{size} \Rightarrow \text{adaptability}.
\]
The generated answer therefore explained that large coastal/Kodiak bears reflect high salmon (high calories), whereas smaller inland bears reflect limited resources, and that this \emph{variation itself} is evidence of species-level adaptability.
By contrast, \textsc{HippoRAG2} retrieved a broader, taxonomy-oriented context about brown-bear subspecies (“the taxonomy of brown bears remains somewhat bewildering”), which led the model to produce a \emph{descriptive} answer (“there are many varieties, so sizes differ”) but not a \emph{mechanistic} one (no resource $\rightarrow$ size link).
\textsc{LightRAG} retrieved topic-level chunks like “Brown bears vary greatly in size depending on where they live,” which was enough for correlation but not for causation.
This illustrates the core advantage: our pruning keeps only the minimal but \emph{functional} intermediates, so the model can follow the hops in order instead of guessing them.

\noindent\textbf{Case study (TV): retrieving the exact two pieces.}
A similar pattern appears on TV-style, entangled questions, e.g.,
\emph{“In \textit{The Simpsons} minor-character descriptions, what in-universe line explains the Yes Guy’s stretched-out ‘Ye-e-e-s?!’, and what does the entry say about Wiseguy not actually having a single fixed name?”}
This question is hard not because the language is long, but because the answer lives in \emph{two} separate mentions: one that gives the in-universe justification (“I had a stro-o-o-oke”) and another that clarifies the meta-labeling of the Wiseguy character.
\textsc{AutoPrunedRetriever} retrieved precisely those two pieces as separate edges/nodes and presented them together, so the LLM could output both the in-universe gag \emph{and} the meta-level note about the character not having a fixed proper name.
\textsc{HippoRAG2}, which tends to over-expand its graph, pulled a broader “recurring jokes / minor characters” context and produced a generic “it’s a running gag” answer that failed to name the stroke line.
\textsc{LightRAG}, which collapses to topic-level chunks, also stayed at the descriptive level (“recurring jokes create humor”) and missed the exact line.
This shows that for entangled narrative questions, the benefit is not “more graph,” but “the \emph{right} two edges at once.”

Overall, complex reasoning is where the pruned, indices-only pipeline shows the clearest advantage over prior GraphRAG variants.

\subsection{Efficiency on Instrumented Corpora (STEM, TV)}
\label{subsec:efficiency-stem-tv}

We measure efficiency on \textbf{STEM} and \textbf{TV}, where we fully control build-time and storage logging.

\noindent\textbf{Retrieval prompt tokens and latency.}
Figure~\ref{fig:tokens} shows average query-time input tokens.
\textsc{AutoPrunedRetriever-REBEL} is most compact (about 1{,}090 tokens on STEM and 523 on TV), followed by \textsc{AutoPrunedRetriever-llm} (3{,}027 / 592).
\textsc{HippoRAG2} and \textsc{LightRAG} send much larger contexts (1{,}898/1{,}589 and 8{,}846/2{,}964).
End-to-end latency (Fig.~\ref{fig:lat}) roughly tracks this token ordering: methods with longer prompts are slower, while APR-REBEL remains competitive.

\begin{figure}[h]
  \centering
  \includegraphics[width=\columnwidth]{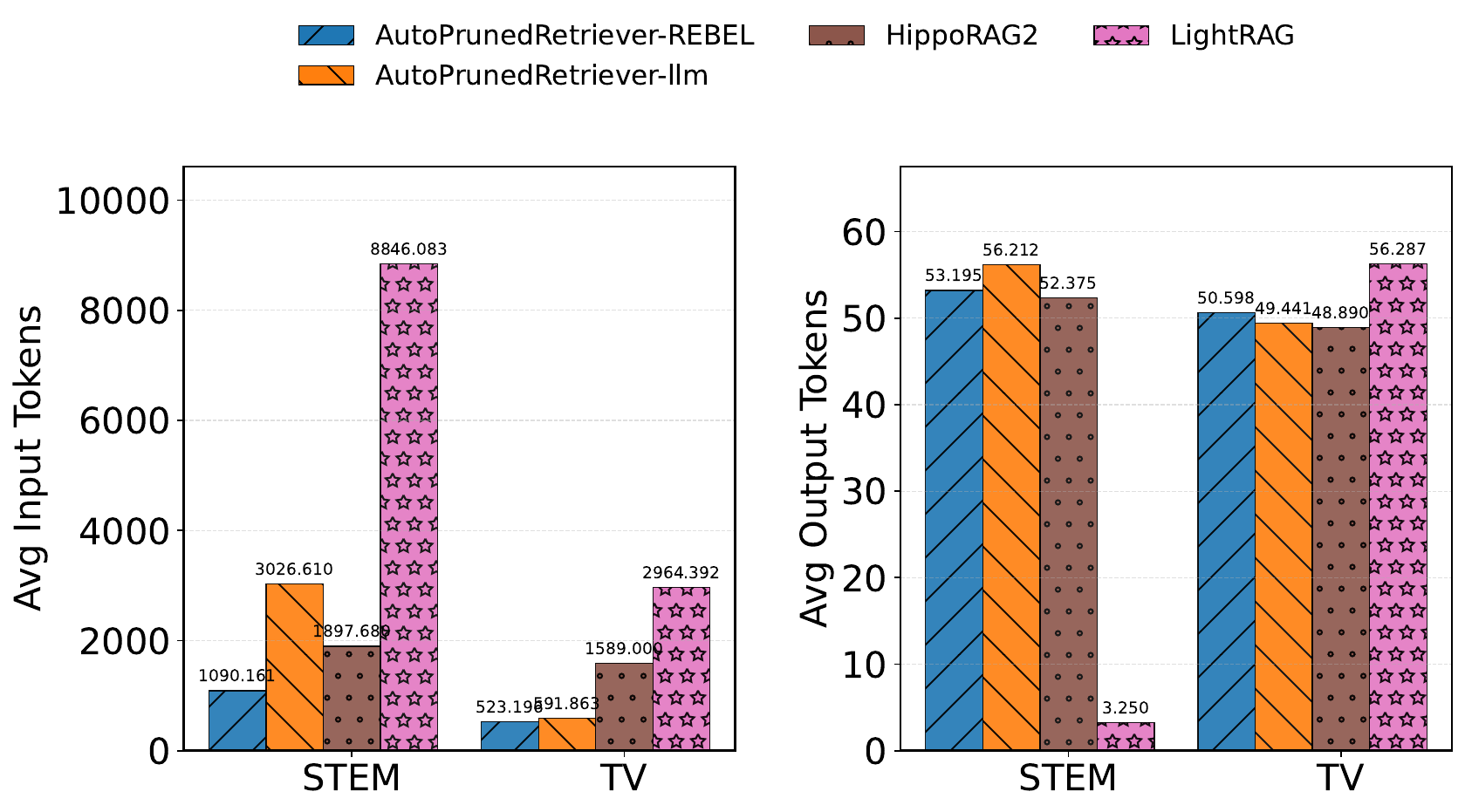}
  \caption{Input and output token usage on STEM and TV.}
  \label{fig:tokens}
  \vspace{-10pt}
\end{figure}

\begin{figure}[h]
  \centering
  \includegraphics[width=\columnwidth]{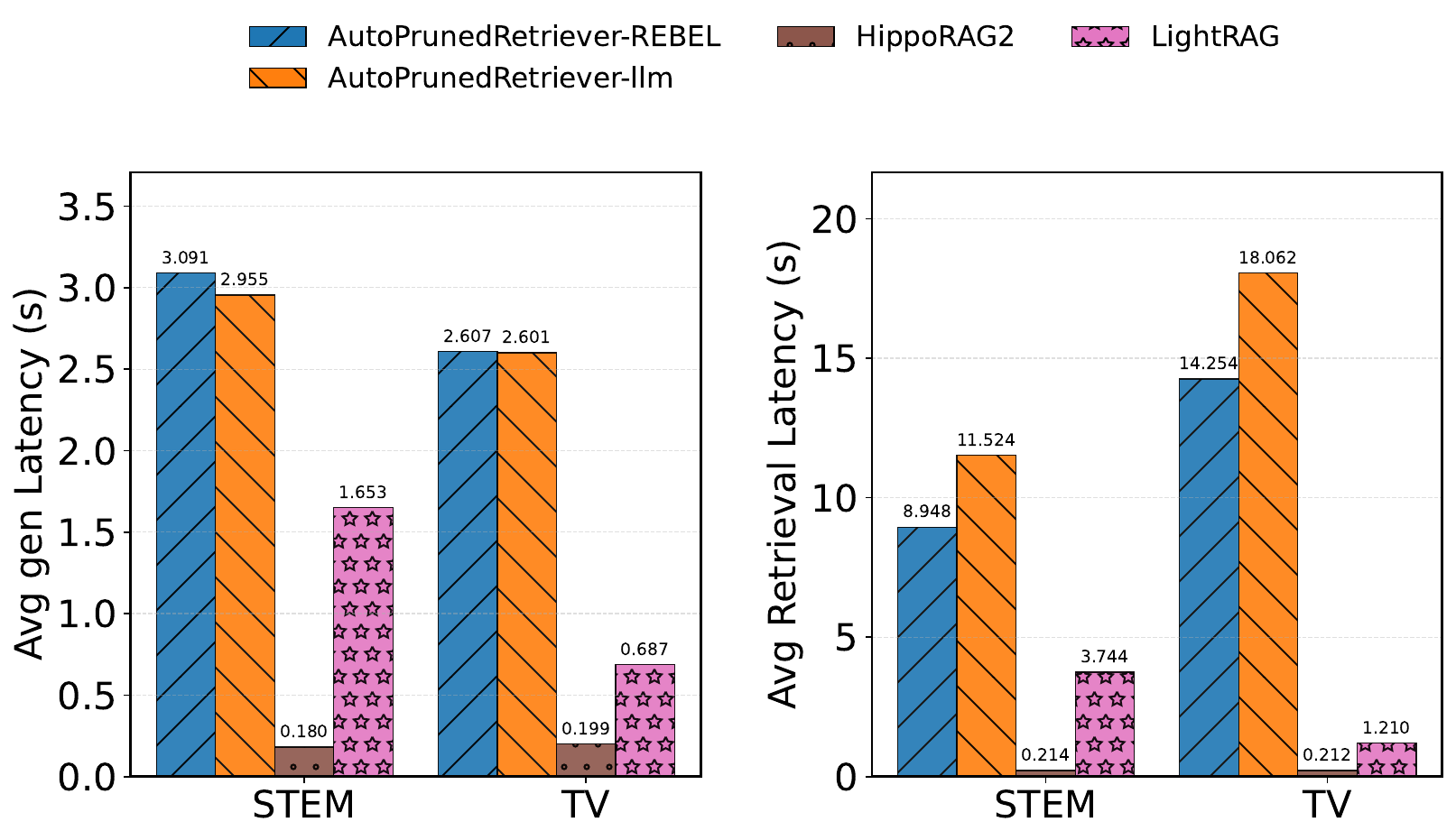}
  \caption{End-to-end latency on STEM and TV.}
  \label{fig:lat}
  \vspace{-10pt}
\end{figure}

\noindent\textbf{Build-time graph/prompt tokens and workspace.}
Figure~\ref{fig:graphinfo} reports serialized graph-side payloads and workspace size.
APR-REBEL is smallest on both corpora; APR-llm stores more LLM-extracted triples ($\approx 1.2\times 10^{6}$ graph/prompt tokens on STEM and $2.84\times 10^{6}$ on TV), but still below \textsc{HippoRAG2} and \textsc{LightRAG}.
Figure~\ref{fig:graphsize} shows that APR’s codebook / graph-size growth has plateaus where new items merge into existing entities, reflecting the two-layer entity-pruning step.

\begin{figure}[h]
  \centering
  \includegraphics[width=\columnwidth]{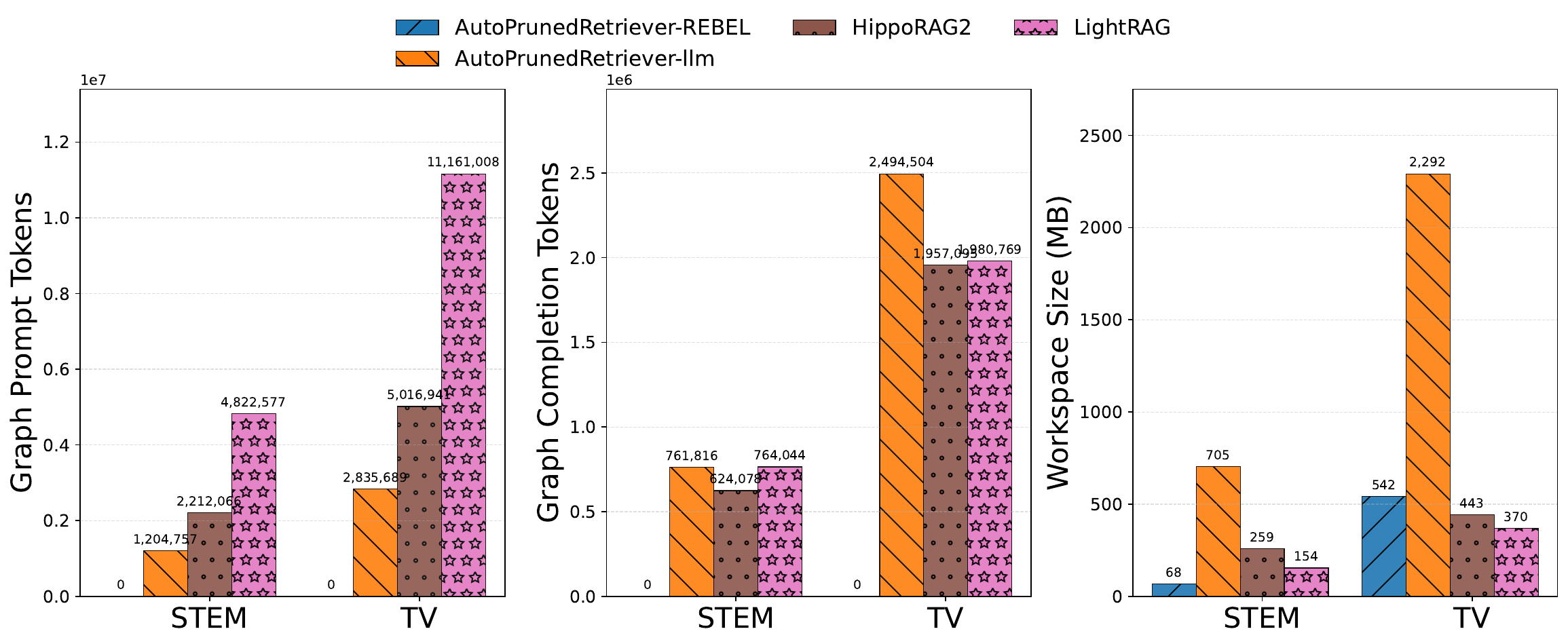}
  \caption{Build-time graph/prompt tokens and workspace usage on STEM and TV.}
  \label{fig:graphinfo}
  \vspace{-10pt}
\end{figure}

\begin{figure}[h]
  \centering
  \includegraphics[width=\columnwidth]{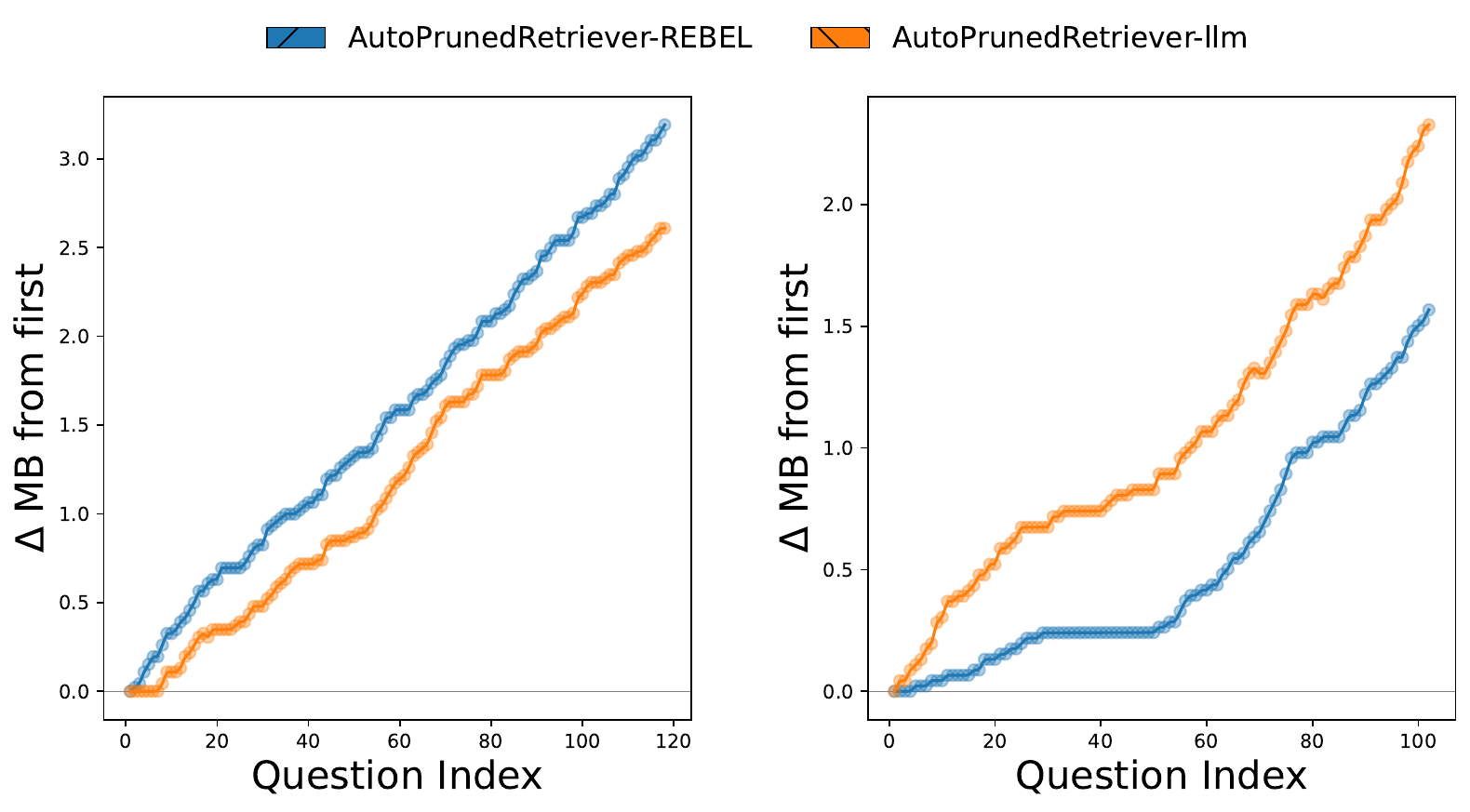}
  \caption{APR codebook / graph-size evolution (left: STEM, right: TV).}
  \label{fig:graphsize}
  \vspace{-10pt}
\end{figure}

\subsection{Full GraphRAG Benchmark}
\label{subsec:graphrag-benchmark}

\begin{table*}[t]
\centering
\caption{GraphRAG benchmark results on \emph{Novel} and \emph{Medical}.}
\resizebox{1.0\textwidth}{!}{
\begin{tabular}{llccccccccc}
\toprule
\textbf{Category} & \textbf{Model} & 
\multicolumn{2}{c}{\textbf{Fact Retrieval}} & 
\multicolumn{2}{c}{\textbf{Complex Reasoning}} & 
\multicolumn{2}{c}{\textbf{Contextual Summarize}} & 
\multicolumn{3}{c}{\textbf{Creative Generation}} \\
\cmidrule(lr){3-4}\cmidrule(lr){5-6}\cmidrule(lr){7-8}\cmidrule(lr){9-11}
 & & ACC & ROUGE-L & ACC & ROUGE-L & ACC & Cov & ACC & Cov & FS Cov \\
\midrule
\multirow{11}{*}{\textbf{Novel Dataset}} 
& RAG (w/o rerank) & 58.76 & \underline{37.35} & 41.35 & 15.12 & 50.08 & 82.53 & 41.52 & 47.46 & 37.84 \\
& RAG (w/ rerank)  & \textbf{60.92} & 36.08 & 42.93 & 15.39 & 51.30 & 83.64 & 38.26 & 49.21 & \textbf{40.04} \\
& MS-GraphRAG (local)~\cite{edge2025localglobalgraphrag} & 49.29 & 26.11 & 50.93 & 24.09 & 64.40 & 75.58 & 39.10 & 55.44 & 35.65 \\
& HippoRAG~\cite{HippoRAG} & 52.93 & 26.65 & 38.52 & 11.16 & 48.70 & \textbf{85.55} & 38.85 & \textbf{71.53} & 38.97 \\
& HippoRAG2~\cite{HippoRAG2} & \underline{60.14} & 31.35 & 53.38 & \underline{33.42} & 64.10 & 70.84 & 48.28 & 49.84 & 30.95 \\
& LightRAG~\cite{lightrag} & 58.62 & 35.72 & 49.07 & 24.16 & 48.85 & 63.05 & 23.80 & 57.28 & 25.01 \\
& Fast-GraphRAG~\cite{CircleMindAI2024FastGraphRAG} & 56.95 & 35.90 & 48.55 & 21.12 & 56.41 & 80.82 & 46.18 & 57.19 & 36.99 \\
& RAPTOR~\cite{Sarthi2024RAPTOR} & 49.25 & 23.74 & 38.59 & 11.66 & 47.10 & 82.33 & 38.01 & \underline{70.85} & 35.88 \\
& Lazy-GraphRAG~\cite{Edge2024LazyGraphRAG} & 51.65 & 36.97 & 49.22 & 23.48 & 58.29 & 76.94 & 43.23 & 50.69 & \underline{39.74} \\
& \textbf{AutoPrunedRetriever-REBEL} & 49.25 & \textbf{38.02} & \textbf{63.02} & 31.25 & \underline{82.55} & \underline{83.95} & \underline{59.94} & 25.78 & 21.21 \\
& \textbf{AutoPrunedRetriever-llm} & 45.99 & 26.99 & \underline{62.80} & \textbf{35.35} & \textbf{83.10} & 83.86 & \textbf{62.97} & 34.40 & 22.13 \\
\midrule
\multirow{11}{*}{\textbf{Medical Dataset}} 
& RAG (w/o rerank) & 63.72 & 29.21 & 57.61 & 13.98 & 63.72 & 77.34 & 58.94 & 35.88 & 57.87 \\
& RAG (w/ rerank)  & \underline{64.73} & 30.75 & 58.64 & 15.57 & 65.75 & \textbf{78.54} & 60.61 & 36.74 & \underline{58.72} \\
& MS-GraphRAG (local)~\cite{edge2025localglobalgraphrag} & 38.63 & 26.80 & 47.04 & 21.99 & 41.87 & 22.98 & 53.11 & 32.65 & 39.42 \\
& HippoRAG~\cite{HippoRAG} & 56.14 & 20.95 & 55.87 & 13.57 & 59.86 & 62.73 & 64.43 & \textbf{69.21} & \textbf{65.56} \\
& HippoRAG2~\cite{HippoRAG2} & \textbf{66.28} & \underline{36.69} & 61.98 & \textbf{36.97} & 63.08 & 46.13 & \textbf{68.05} & \underline{58.78} & 51.54 \\
& LightRAG~\cite{lightrag} & 63.32 & \textbf{37.19} & 61.32 & 24.98 & 63.14 & 51.16 & - & - & - \\
& Fast-GraphRAG ~\cite{CircleMindAI2024FastGraphRAG} & 60.93 & 31.04 & 61.73 & 21.37 & \underline{67.88} & 52.07 & \underline{65.93} & 56.07 & 44.73 \\
& RAPTOR~\cite{Sarthi2024RAPTOR} & 54.07 & 17.93 & 53.20 & 11.73 & 58.73 & \underline{78.28} & - & - & - \\
& Lazy-GraphRAG~\cite{Edge2024LazyGraphRAG} & 60.25 & 31.66 & 47.82 & 22.68 & 57.28 & 55.92 & 62.22 & 30.95 & 43.79 \\
& \textbf{AutoPrunedRetriever-REBEL} & 61.28 & 32.96 & \textbf{72.49} & 30.79 & \underline{68.78} & 40.15 & 64.04 & 32.19 & 11.12 \\
& \textbf{AutoPrunedRetriever-llm} & 61.25 & 34.69 & \underline{71.59} & \underline{31.11} & \textbf{70.14} & 40.59 & 65.02 & 33.06 & 28.62 \\
\bottomrule
\end{tabular}
}
\label{tab:rag_comparison}
\end{table*}

We now evaluate on the full GraphRAG benchmark~\cite{xiang2025when-graphs-in-rag}, i.e., \textbf{Medical} and \textbf{Novel} across Fact Retrieval, Complex Reasoning, Contextual Summarize, and Creative Generation (Table~\ref{tab:rag_comparison}).

On \textbf{Contextual Summarize}, APR transfers well: on Medical it reaches \textbf{68.78\%}/\textbf{70.14\%} ACC (REBEL/LLM), slightly above \textsc{Fast-GraphRAG} (67.88\%), and on Novel it reaches \textbf{82.55\%}/\textbf{83.10\%}, far above \textsc{MS-GraphRAG (local)} (64.40\%).
On \textbf{Fact Retrieval}, APR-REBEL matches or exceeds strong baselines in ROUGE-L (e.g., \textbf{38.02} on Novel) while keeping ACC competitive with classic RAG and Hippo-style systems.
For \textbf{Creative Generation}, APR-llm attains \textbf{62.97\%} ACC on Novel and 65.02\% on Medical, near or above other graph-based methods.

Token usage on GraphRAG (Fig.~\ref{fig:toks-mean}) remains low: APR-REBEL uses \textbf{1{,}110} tokens on Novel and \textbf{1{,}341} on Medical; APR-llm uses \textbf{956} and \textbf{2{,}234}, placing both among the most compact graph-aware systems while staying competitive or SOTA on complex reasoning and summarization.

\begin{figure}[h]
  \centering
  \includegraphics[width=\columnwidth]{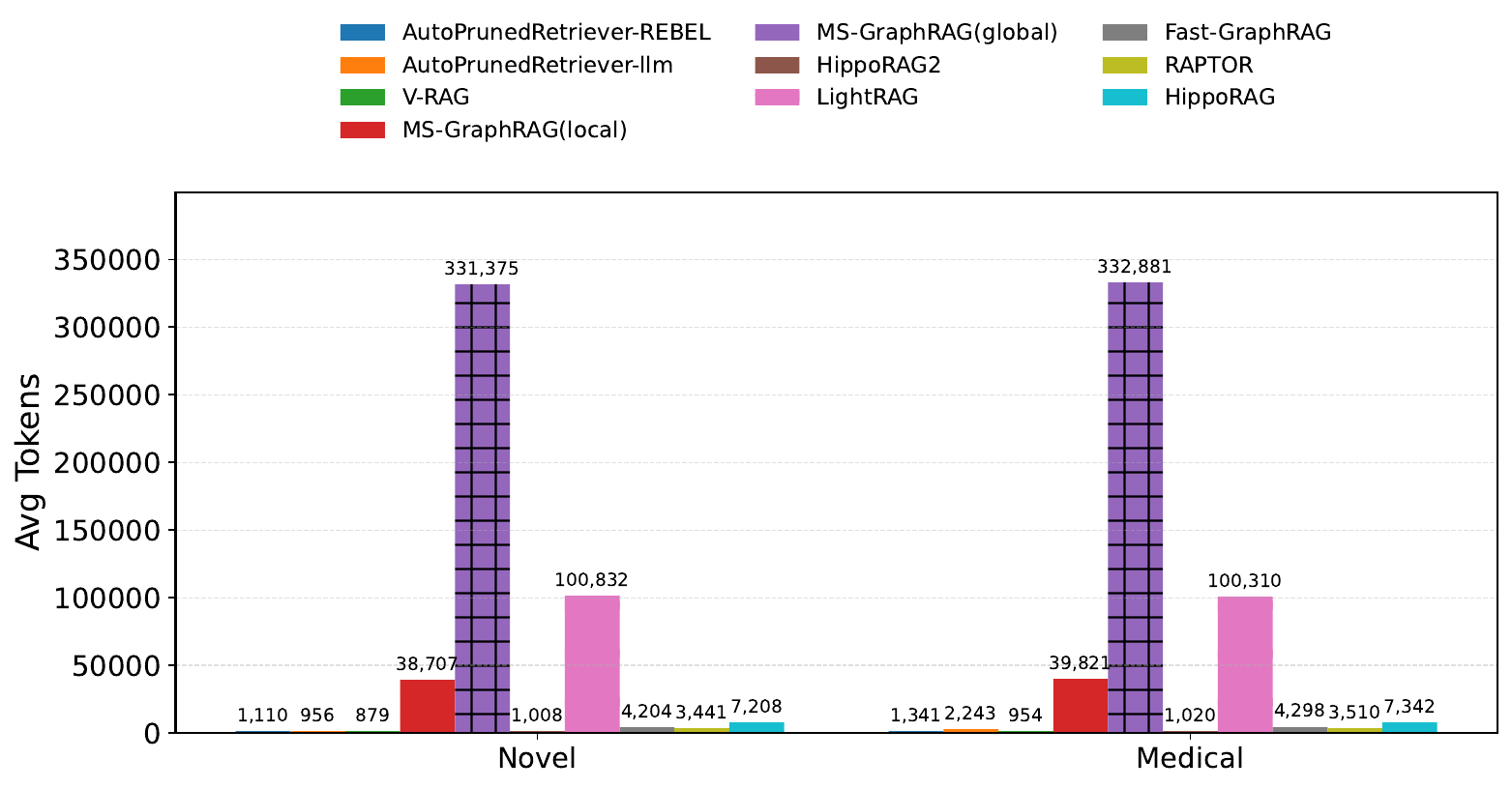}
  \caption{Average input token usage on the GraphRAG benchmark (Medical, Novel).}
  \label{fig:toks-mean}
  \vspace{-10pt}
\end{figure}

\section{Related Work}

\textbf{Retrieval-Augmented Generation (RAG).}
RAG grounds LLMs in external knowledge via retriever--reader pipelines such as DrQA~\cite{Chen2017DrQA}, DPR~\cite{Karp2020dpr}, and the original RAG model~\cite{lewis2020rag}. REALM integrates retrieval into pre-training with a non-parametric memory~\cite{Guu2020REALM}, while FiD performs passage-wise encoding and fusion for strong open-domain QA at higher decoding cost~\cite{Izacard2021FiD}. Atlas and follow-ups show that retrieval-augmented LMs with updatable indices and reranking can reach strong few-shot performance~\cite{izacard2022atlas}, but these text-first systems still treat queries independently and remain passage-centric and token-heavy for multi-hop or long-context reasoning.

\textbf{Vector Retrieval and Re-ranking.}
Dense retrieval embeds queries and documents into a shared space~\cite{Lee2019ORQA,Karp2020dpr} and is typically served by ANN indexes such as FAISS~\cite{Johnson2017FAISS} or HNSW~\cite{Malkov2016HNSW}. Two-stage pipelines then apply stronger rerankers: BERT-based cross-encoders~\cite{Nogueira2019BERT} and late-interaction models like ColBERT/ColBERTv2~\cite{Khattab2020ColBERT,ColBERTv2} improve ranking quality while trading off latency and indexability.

\textbf{Graph-based RAG (GraphRAG).}
Graph-based RAG replaces flat passages with entity--relation structure. MS-GraphRAG builds graphs over private corpora and retrieves summarized subgraphs, with a local variant that builds per-session graphs~\cite{Edge2024Summ}. Lazy-GraphRAG defers graph construction to query time~\cite{Edge2024Lazy}. HippoRAG and HippoRAG2 introduce neuro-inspired long-term memory that consolidates and reuses reasoning paths~\cite{JimenezGutierrez2025HippoRAG,JimenezGutierrez2025Memory}. LightRAG and Fast-GraphRAG simplify graphs into key--value forms for low-latency retrieval~\cite{Guo2025LightRAG,CircleMindAI2024FastGraphRAG}, while RAPTOR uses a tree of recursive summaries instead of an explicit graph~\cite{Sarthi2024RAPTOR}. Together, these illustrate the shift from passage-centric to structured retrieval, but still pay substantial graph-serialization and token cost.

\textbf{Memory-Augmented Architectures.}
External-memory LMs maintain persistent, updatable stores alongside parameters, exploring scalable memory~\cite{wu2022memorizing}, efficient retrieval~\cite{liu2024memory}, and continual consolidation~\cite{dao2023hungry}. These build on differentiable memory architectures such as memory networks~\cite{weston2014memory} and neural Turing machines~\cite{graves2014neural}.

\textbf{Efficient Model Deployment.}
LLM efficiency is improved by quantization and low-rank adaptation (e.g., QLoRA)~\cite{dettmers2023qlora}, mixed-precision training~\cite{micikevicius2017mixed}, and hardware-aware optimization~\cite{ai2024efficient}. Adaptive computation and dynamic routing~\cite{schuster2022confident,li2023efficient} further trade depth and routing complexity against accuracy, complementing retrieval- and memory-side efforts to cut tokens and latency.

\section{Limitations}
Our study has several limitations. First, we evaluate \textsc{AutoPrunedRetriever} primarily on English, knowledge-intensive QA benchmarks (GraphRAG-Benchmark, STEM, TV); it is unclear how well the same pruning and symbolization scheme transfers to other languages, domains, or noisy user logs. Second, our pipeline depends on upstream triple extractors (REBEL or an LLM); systematic extraction errors or missing relations can still harm downstream reasoning, and we do not jointly train extraction and retrieval. Finally, we focus on text-only corpora and single-turn question answering in agentic settings, leaving multimodal inputs, tool-use workflows, and human-in-the-loop updates to future work.

\bibliographystyle{acl_natbib}
\bibliography{main}

\clearpage
\appendix
\section{appendix}
\subsection{Theoretical Properties}
\label{app:theory}

\subsubsection{Encoding and Heavy-Tailed Repetition}

\begin{lemma}[Concentration of repetition]
\label{lem:heavy-tail}
Let $\mathcal{V}$ be the corpus vocabulary and $f:\mathcal{V}\to\mathbb{N}$ token frequencies.
In typical language corpora $f$ is heavy-tailed: there is a core set
$\mathcal{V}_{\mathrm{core}}\subset\mathcal{V}$ such that
\[
  \sum_{v\in\mathcal{V}_{\mathrm{core}}} f(v)
  \;\approx\; \Theta(|\mathcal{C}|),
\]
where $|\mathcal{C}|$ is the total token count.
Thus, indexing recurrent entities/relations captures most informational mass while reducing redundancy.
\end{lemma}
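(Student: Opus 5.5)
The statement is informal ("in typical language corpora"), so I will first make it precise and then prove the precise version. I will model the frequencies by a Zipf law: rank the vocabulary so that $f(v_{(1)})\ge f(v_{(2)})\ge\cdots$, and assume $f(v_{(i)}) = c\, i^{-s}$ for some exponent $s\ge 1$ (empirically $s\approx 1$) and normalising constant $c$. Here $|\mathcal{C}|=\sum_{i=1}^{|\mathcal{V}|} f(v_{(i)})$ is the total token count. The claim then becomes: there is a core $\mathcal{V}_{\mathrm{core}}$ of size $k\ll|\mathcal{V}|$ whose mass is at least a constant fraction $\rho\in(0,1)$ of $|\mathcal{C}|$, with $\rho$ independent of $|\mathcal{C}|$. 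Since the core mass is trivially at most $|\mathcal{C}|$, a lower bound $\rho|\mathcal{C}|$ gives the $\Theta(|\mathcal{C}|)$ statement.

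The main step is a partial-sum comparison. Take $\mathcal{V}_{\mathrm{core}}=\{v_{(1)},\dots,v_{(k)}\}$, so the core's share of the mass is
\[
\frac{\sum_{i\le k} i^{-s}}{\sum_{i\le |\mathcal{V}|} i^{-s}}.
\]
\begin{itemize}
\item For $s>1$, both sums converge to $\zeta(s)$. For any target $\rho<1$ one can choose a fixed $k$, independent of corpus size, so that the ratio is at least $\rho$.
\item For $s=1$, use $H_n=\ln n+O(1)$. The ratio is then about $\ln k/\ln|\mathcal{V}|$.
\end{itemize}
Heaps' law gives $|\mathcal{V}|\sim |\mathcal{C}|^{\beta}$ with $\beta<1$. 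Choosing $k=|\mathcal{V}|^{\rho}$ therefore captures a $\rho$ fraction of the mass, while $k/|\mathcal{V}|\to 0$. In either regime the core has vanishing relative size but constant relative mass.

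To connect this to the codebook, I will argue that each core item occurring $f(v)$ times is stored once in $E$ or $R$ and referenced by ID thereafter. The saving is $\sum_{v\in\mathcal{V}_{\mathrm{core}}}(f(v)-1)\cdot(\text{len}(v)-\text{len}(\mathrm{ID}))$, which is $\Theta(|\mathcal{C}|)$ minus a lower-order $O(k)$ term.

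The hard step is the $s=1$ borderline case, together with the mismatch between token frequencies and entity/relation frequencies. Under pure Zipf with $s=1$, a core of size $o(|\mathcal{V}|)$ captures only a fraction like $\ln k/\ln|\mathcal{V}|$. So the claim cannot be made fully distribution-free, and the lemma has to be stated as conditional on a heavy-tail hypothesis. I would first try to assume a tail condition directly, namely $\sum_{i>k} f(v_{(i)})\le(1-\rho)|\mathcal{C}|$ for $k=o(|\mathcal{V}|)$, and derive the lemma from it. I would then show that Zipf with $s\ge1$ combined with Heaps' law satisfies this condition, as computed above. I would treat the transfer from tokens to E--R symbols as a modeling assumption rather than something proved.
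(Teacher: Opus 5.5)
Your proposal follows the paper's route and makes it quantitative. The paper's sketch only appeals to empirical Zipf-like laws to say that a small set of types carries most occurrences, and that mapping them to IDs removes repeated surface forms. Your partial-sum computation supplies the missing detail: a fixed $k$ works for $s>1$, and $k=|\mathcal{V}|^{\rho}$ works for $s=1$. Your explicit size constraint $k\ll|\mathcal{V}|$ is also needed for the statement to have content, since $\mathcal{V}$ minus its rarest type already carries mass at least $|\mathcal{C}|/2$. Under the Zipf hypothesis this is a correct rigorous version of the concentration claim.

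The codebook-saving step needs tightening. The bound $\sum_{v}(f(v)-1)(\mathrm{len}(v)-\mathrm{len}(\mathrm{ID}))=\Theta(|\mathcal{C}|)$ holds only if $\mathrm{len}(v)-\mathrm{len}(\mathrm{ID})$ is bounded below by a positive constant on most of the core's mass. That fails for single-token vocabulary items, which is exactly what your $\mathcal{V}$ consists of. In addition, the token-level Zipf core is mostly function words, which triple extraction discards. So that conclusion rests entirely on the token-to-entity/relation modeling assumption you flag, not on the concentration bound.
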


\begin{proof}[Sketch]
Empirical token distributions in large corpora follow Zipf-like laws; a small set
of types accounts for most occurrences.
Mapping these to IDs and sharing them across queries/facts removes repeated surface
forms without losing the dominant mass.
\end{proof}

\subsubsection{Chunked Small Graphs (Local-First Construction)}

\begin{lemma}[Maximal local-coherence partition]
\label{lem:maximal-coherent}
Fix a threshold $\tau$, a bounded continuity bonus $0\le b<\infty$, and a fit rule that is
monotone nonincreasing as the centroid drifts away from a triple.
The one-pass rule “append if fit-score$\ge\tau$, else cut” yields a partition
$\mathcal{G}=(G_1,\dots,G_K)$ in which each $G_k$ is \emph{maximal}: no additional triple can be appended without violating the fit test.
\end{lemma}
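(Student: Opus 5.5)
The argument is by induction along the stream, using only the definition of the one-pass rule. First I fix what "maximal" has to mean for a streaming segmenter. Since triples arrive in order and each $G_k$ must be a contiguous run, "no additional triple can be appended" should mean: the first triple following $G_k$ in the stream (the one that triggered the cut) fails the fit test against $G_k$ in its final state. I will also show that, under the monotonicity hypothesis, no triple that $G_k$ accepted earlier could be re-appended after its final state. Let the stream be $x_1,\dots,x_N$. Write $\mathrm{fit}(x,G)=\cos(\phi(x),c(G))+b\cdot\mathbb{1}[\text{continuity}(x,G)]$, with $0\le b<\infty$.

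\emph{Step 1: partition.} By induction on $i$, each $x_i$ is placed in exactly one current graph. Either it is appended to $G_k$, or $G_k$ is closed and $x_i$ opens $G_{k+1}$. The opening triple of a fresh graph is accepted by convention, since the centroid is $\phi(x_i)$ itself. The graphs are therefore contiguous, disjoint and nonempty, and they cover the stream. Because $b$ is finite and cosine lies in $[-1,1]$, every score is a well-defined real number. Every comparison with $\tau$ is therefore decided, and the procedure terminates after exactly $N$ tests.

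\emph{Step 2: maximality at the cut.} Suppose $G_k$ is closed when $x_j$ arrives. By the rule, $\mathrm{fit}(x_j,G_k)<\tau$, and this is evaluated against the final state of $G_k$, because closing adds nothing. So appending the next available triple violates the test. Any later triple $x_{j'}$ with $j'>j$ would break contiguity, so it is not a valid extension of a run. The last graph $G_K$ is trivially maximal, since no triples remain.

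\emph{Step 3: robustness to drift.} This is where the monotonicity assumption does its work, and I expect it to be the main obstacle, because the centroid moves as triples are appended. The hypothesis says fit is nonincreasing as the centroid drifts away from a triple. Appending cannot pull the centroid toward $x_j$, since the appended triples precede $x_j$ and were chosen without reference to it. So a triple that would have been rejected at some earlier state of $G_k$ stays rejected in any later state. The continuity bonus can only make a triple pass, and it is capped at $b$. Hence a failure of the form $\cos<\tau-b$ persists, and in the bonus-free case a failure persists directly by monotonicity.

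Making "drift away" precise is the delicate part. I would read the hypothesis as the modeling assumption that $\cos(\phi(x),c(G'))\le\cos(\phi(x),c(G))$ whenever $G'$ extends $G$ by triples that were not fit to $x$. The lemma then follows by chaining this inequality along the run. If that reading is too strong, I would fall back on Step 2 alone, which already gives the stated maximality for the streaming order.
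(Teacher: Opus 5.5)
Your proof is correct, but you put the weight somewhere other than the paper's sketch does. The paper makes the monotonicity hypothesis load-bearing. It argues that appending can only lower the fit of future triples because the centroid ``moves away from any fixed candidate,'' so a triple that fails would fail against any larger graph, and each cut is a maximal prefix. Your Step 2 shows that for the contiguous-prefix notion of maximality the lemma actually states, none of this is needed. $G_k$ is closed exactly when the next triple $x_j$ fails the test against the final state of $G_k$, and contiguity rules out every other candidate, so maximality follows from the rule itself; $b<\infty$ only ensures the scores are well-defined.

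Your Step 3 then recovers the paper's argument as an optional strengthening: a rejected triple cannot be rescued by enlarging the graph. It is more careful than the paper on two points. First, ``drift away'' is a modeling assumption, not a consequence, since appended triples chosen without reference to $x_j$ can still move the centroid toward it. Second, the continuity bonus can switch on as $V_k$ grows or the path tail changes, so only failures with margin $b$ (i.e., $\cos<\tau-b$) are guaranteed to persist.

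What each approach buys: your version is airtight for the stated claim and isolates exactly where the extra hypothesis enters. The paper's version aims at the stronger ``cannot be rescued'' property, which holds only under the drift assumption and, with a bonus present, only up to margin $b$.

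One small fix. Your preview promises that no triple $G_k$ \emph{accepted} earlier could be re-appended after its final state. Step 3 does not show this, and monotonicity cannot give it: an accepted triple only had fit $\ge\tau$ at the time, and a nonincreasing fit need not drop below $\tau$. Drop that sentence.
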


\begin{proof}[Sketch]
Within a segment, appending triples can only decrease (or leave unchanged) the fit
of future triples because the centroid moves away from any fixed candidate.
Once a triple fails the test, any larger graph containing it would also fail.
Thus, each cut point defines a maximal prefix w.r.t.\ the local rule.
\end{proof}

\begin{lemma}[Boundary-consistency merge]
\label{lem:merge-correct}
Let $\Theta$ denote the segmenter parameters (threshold, continuity bonus, etc.).
For adjacent runs $(L,R)$, define $\textsc{Merge}(L,R)$ as true iff the segmenter
applied to the concatenation $L\!\parallel\!R$ either (i) produces a single chunk
or (ii) places its first cut away from the original boundary at $|L|$.
Then a boundary is removed \emph{iff} the segmenter, when given both sides at once,
would not cut at that location.
Surviving boundaries are fixed points of the segmenter under local re-evaluation.
\end{lemma}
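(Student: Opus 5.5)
The plan is to treat the lemma as two claims and prove each by unfolding the definition of \textsc{Merge}. Fix the parameters $\Theta$ and write $S_\Theta(x)$ for the ordered list of cut positions that the segmenter outputs on a triple sequence $x$. Let $\beta=|L|$ be the original boundary inside $L\!\parallel\!R$. The claim to prove is
\[
\textsc{Merge}(L,R)\iff \beta\notin S_\Theta(L\!\parallel\!R).
\]

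For the forward direction, suppose \textsc{Merge} returns true. In case (i), $S_\Theta(L\!\parallel\!R)=\emptyset$, so $\beta$ is not a cut. In case (ii), the first cut lies at some $c\neq\beta$. The one-pass rule of Lemma~\ref{lem:maximal-coherent} is deterministic and local: the decision at position $j$ depends only on the current open graph $G_k$, i.e.\ on the prefix since the last cut. So I would argue that if the first cut is at $c>\beta$, then position $\beta$ lies inside the first chunk and is not a cut.

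If $c<\beta$, this does not follow automatically. Here I would restrict attention to the boundary region, which is what the method actually re-embeds. I would take the window to start at the last cut of $L$, so that the segmenter cut $L$ alone nowhere before $\beta$. Since the segmenter on $L\!\parallel\!R$ sees the identical prefix, its first cut is then either at $\beta$ or beyond it. The sub-case $c<\beta$ therefore cannot occur, and (ii) reduces to $c>\beta$.

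For the converse, suppose $\beta\notin S_\Theta(L\!\parallel\!R)$. By the same prefix argument, there is no cut before $\beta$. Hence either no cut exists, which is case (i), or the first cut is beyond $\beta$, which is case (ii). In both cases \textsc{Merge} is true.

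For the fixed-point statement, take a surviving boundary $\beta$. The equivalence gives $\beta\in S_\Theta(L\!\parallel\!R)$. Re-running the segmenter on the local window therefore reproduces exactly this cut. By prefix-determinism, it also reproduces the segmentation of $L$ up to $\beta$. So the boundary is invariant under the local re-evaluation map, and a second application of \textsc{Merge} changes nothing.

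The main obstacle is the $c<\beta$ sub-case of (ii). Read literally, the definition would call a pair "merge" even when the segmenter cuts inside $L$ and also at $\beta$, and the "iff" would then be false. I expect the proof must fix the window convention: $L$ is a single run, hence maximal and cut-free by Lemma~\ref{lem:maximal-coherent}. The key step is the determinism and prefix-locality of the fit rule, which shows that the segmenter reproduces $L$ unchanged up to $\beta$. If bonus-adjusted scores depend on anything beyond the current chunk, for example a global centroid, this step would fail. I would state prefix-locality explicitly as an assumption on $\Theta$.
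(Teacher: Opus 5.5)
You are right that if ``would not cut at that location'' means $\beta\notin S_\Theta(L\parallel R)$, clause (ii) lets a pair merge even when the re-run cuts inside $L$ and again at $\beta$. The gap is in the step you use to exclude that sub-case: it proves too much. Your premises are a deterministic, prefix-local fit rule on the same triple features as the original pass, a window starting at the first triple of $L$, and $L$ a one-pass run that is maximal by Lemma~\ref{lem:maximal-coherent}. Under these, the open graph when the first triple of $R$ arrives is exactly $L$ in both passes. Maximality says precisely that this triple fails the fit test. Hence the re-run always places its first cut at $\beta$, and \textsc{Merge} is identically false. Your equivalence then holds only vacuously, since both sides are always false, and the fixed-point claim is trivial because every boundary survives. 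The refinement step has content only because the method \emph{re-embeds} the concatenated boundary region before re-segmenting. In exactly that regime the re-run sees different features than the original pass, so your ``identical prefix'' step fails. Nothing then rules out a first cut at $c<\beta$ followed by a cut at $\beta$. So a prefix-locality assumption on $\Theta$ cannot rescue the all-cuts reading in any setting where the lemma says something.

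The paper does not attempt this. Its sketch treats the conclusion as a restatement of the test itself. The re-run applies the same algorithm and hyperparameters to the local window, and the boundary is removed iff that re-run ``prefers a different cut,'' which is clauses (i)--(ii) verbatim. Surviving boundaries are, by the contrapositive, exactly those the same rule reproduces when both sides are visible. That first-cut reading is the one under which the lemma is true as stated. If you want the stronger form $\beta\notin S_\Theta(L\parallel R)$, change the definition instead: replace ``first cut away from $|L|$'' by ``no cut at $|L|$.'' The iff is then immediate, and no extra hypothesis on $\Theta$ is needed.
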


\begin{proof}[Sketch]
The merge test re-runs exactly the same algorithm and hyperparameters on the local window.
If the “true” segmentation prefers a different cut, we merge; otherwise we keep the boundary.
This is equivalent to requiring boundary self-consistency under the same rule.
\end{proof}

\begin{lemma}[Intra-chunk cohesion bound]
\label{lem:cohesion}
Assume unit-normalized triple embeddings $(v_i)$ and an acceptance rule
$\cos(\bar{c},v_i)+\delta_i\ge\tau$ with $0\le\delta_i\le b$, where $\bar{c}$ is the
running centroid.
For any completed small graph $G_k$ with $|G_k|\ge 2$,
\[
  \frac{2}{|G_k|(|G_k|\!-\!1)}\sum_{p<q}\cos(v_p,v_q)\ \ge\ \tau-b.
\]
\end{lemma}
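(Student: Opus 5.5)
The plan is a telescoping argument over the order in which triples are appended to $G_k$. Write $G_k=(v_1,\dots,v_n)$ with $n=|G_k|\ge 2$, listed in acceptance order. Let $m_{i-1}=\frac{1}{i-1}\sum_{p<i} v_p$ be the running centroid at the moment $v_i$ is tested. The first triple opens the graph and is not tested, so the acceptance rule constrains only $v_2,\dots,v_n$.

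The key observation is that, for the mean centroid, the inner product with a new triple is exactly an average of pairwise cosines. Since the $v_p$ are unit vectors,
\[
\langle m_{i-1},v_i\rangle=\frac{1}{i-1}\sum_{p<i}\cos(v_p,v_i).
\]
I would read the acceptance test $\cos(\bar c,v_i)+\delta_i\ge\tau$ as a test on this centroid score $\langle m_{i-1},v_i\rangle$, with $\bar c=m_{i-1}$. Together with $\delta_i\le b$, this gives
\[
\sum_{p<i}\cos(v_p,v_i)\ \ge\ (i-1)(\tau-\delta_i)\ \ge\ (i-1)(\tau-b).
\]
Summing over $i=2,\dots,n$, each unordered pair $p<q$ appears exactly once, namely at step $i=q$. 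Hence
\[
\sum_{p<q}\cos(v_p,v_q)\ \ge\ (\tau-b)\sum_{i=2}^{n}(i-1)=(\tau-b)\frac{n(n-1)}{2}.
\]
Dividing by $n(n-1)/2$ gives the claim. Neither maximality (Lemma~\ref{lem:maximal-coherent}) nor the boundary-merge step (Lemma~\ref{lem:merge-correct}) is needed. One caveat: if boundary refinement merges two runs, the merged run's cohesion must be justified by re-running the same acceptance sequence on the concatenation, which is exactly what the merge test does.

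The main obstacle is the centroid convention. If the code normalizes the centroid, the test only guarantees $\langle m_{i-1},v_i\rangle\ge\|m_{i-1}\|(\tau-b)$. When $\tau-b>0$ this is weaker than needed, because $\|m_{i-1}\|\le 1$.

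I would handle this in one of two ways. The first is to state explicitly that the fit score uses the unnormalized mean, i.e.\ the ``cosine to centroid'' is computed as $\langle m,v\rangle$. The second is to keep the normalized test and weaken the conclusion to
\[
\frac{2}{n(n-1)}\sum_{p<q}\cos(v_p,v_q)\ \ge\ (\tau-b)\min_i\|m_{i-1}\|.
\]
In either case the bound is trivial when $\tau\le b$.
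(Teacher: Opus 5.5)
Your telescoping argument is correct for the reading you adopt, and it takes a genuinely different route from the paper's sketch. The paper places the running centroid in the convex hull of the accepted embeddings and then appeals to unspecified ``triangle-type inequalities.'' You instead use the exact identity $(i-1)\langle m_{i-1},v_i\rangle=\sum_{p<i}\cos(v_p,v_i)$ and observe that summing over $i=2,\dots,n$ counts each pair exactly once, so the constant $\tau-b$ comes out with no geometric loss. Going through angles and a triangle inequality naturally degrades the constant: two angles of size $\arccos(\tau-b)$ compose to $\cos(2\arccos(\tau-b))=2(\tau-b)^2-1$. Your identity is what actually yields $\tau-b$, and it shows which hypothesis is doing the work.

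That hypothesis is the crux, and your caveat is essential rather than cosmetic. With a true cosine to the running mean, the lemma is false in general when $\tau>b$. Take $b=0$, $\tau=\frac12$, $v_1=(1,0,0)$, $v_2=(\frac12,\frac{\sqrt3}{2},0)$ and $v_3=(\frac{\sqrt3}{4},\frac14,\frac{\sqrt3}{2})$. Then $\cos(v_1,v_2)=\frac12$, $\|m_2\|=\frac{\sqrt3}{2}$ and $\cos(m_2,v_3)=\frac12$, so both tests pass. Yet the mean pairwise cosine is $\frac{1+\sqrt3}{6}<\frac12$. The paper's sketch therefore cannot be completed as stated, and one of your two repairs is required.

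A few corrections to how you phrase those repairs:
\begin{itemize}
\item Cosine is invariant under rescaling $\bar c$, so the issue is not whether the code normalizes the centroid. It is whether the score is $\langle m,v\rangle/\|m\|$ (the literal statement) or the raw $\langle m,v\rangle$ (your reading).
\item The bound is not trivial for $\tau\le b$: for $n=2$ and $\tau-b=-\frac12$ it asserts $\cos(v_1,v_2)\ge-\frac12$. What is true is that there the literal test already gives $\langle m_{i-1},v_i\rangle\ge\|m_{i-1}\|(\tau-b)\ge\tau-b$, because $\|m_{i-1}\|\le 1$. So your telescoping proves the full bound under either convention in that regime.
\item Your $\min_i\|m_{i-1}\|$ version is valid only for $\tau\ge b$, and there it can be made explicit. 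With $S_i=\sum_{p\le i}v_p$, the test gives $\|S_i\|^2\ge\|S_{i-1}\|^2+2(\tau-b)\|S_{i-1}\|+1$, hence $\|S_i\|\ge\|S_{i-1}\|+(\tau-b)$. It follows that $\|m_{i-1}\|\ge\tau-b$, and the mean pairwise cosine is at least $(\tau-b)^2$.
\item Your merge remark covers only case (i) of Lemma~\ref{lem:merge-correct}. In case (ii) the re-run cuts elsewhere, so unless those new cuts are adopted, the merged run was never accepted as a single chunk and the telescoping argument does not apply to it.
\end{itemize}
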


\begin{proof}[Sketch]
The centroid always lies in the convex hull of the triple embeddings.
The acceptance condition ensures each new triple is not too far (in cosine) from the current centroid, up to the continuity bonus $b$.
Averaging pairwise cosines and using triangle-type inequalities yields the bound.
\end{proof}

\begin{lemma}[Working-set reduction for retrieval]
\label{lem:retrieval}
Let $M=|\mathsf{M}|$ be the number of distinct edges, and suppose runs have lengths
$\ell_1,\dots,\ell_K$ with $L=\max_k \ell_k$.
If symbolic pre-filtering selects $H$ candidate runs for re-ranking, then the fine stage touches at most $H\cdot L$ edges.
Compared to scanning all edges, this yields an asymptotic reduction factor
$\Omega(M/(H\cdot L))$.
\end{lemma}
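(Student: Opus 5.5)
The argument is a direct counting bound followed by a ratio computation. I will treat the fine stage of Sec.~\ref{sec:retrieve} abstractly: its only input is the shortlist $I_H$ of $H$ runs returned by the coarse symbolic stage. Each run $\mathbf{g}_k\in\mathsf{Runs}(y)$ is a sequence in $\mathsf{M}^{\star}$ of length $\ell_k$, and the fine stage only linearizes and embeds the triples belonging to runs in $I_H$. So the first step is to write the set of edges touched by the fine stage as $T=\bigcup_{k\in I_H}\operatorname{supp}(\mathbf{g}_k)\subseteq\mathsf{M}$. The union bound and $\ell_k\le L$ then give
\[
|T|\le\sum_{k\in I_H}\ell_k\le H\cdot L .
\]
If the same edge ID appears in several shortlisted runs, this only helps, because $T$ is a set.

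Next I compare this with the baseline. Scanning all edges touches every element of $\mathsf{M}$, that is, $M$ edges. The ratio of baseline work to fine-stage work is therefore at least $M/(H L)$. This gives the reduction factor $\Omega(M/(H\cdot L))$ in the regime where $H$ and $L$ are held fixed, or grow more slowly than $M$, as the corpus grows.

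For the cost model I will assume that per-edge re-ranking work (embedding a line and filling one row or column of $\mathbf{S}$) is $\Theta(1)$ in $M$ for a fixed query. The fine-stage cost is then $O(HL)$ and the scan cost is $\Theta(M)$. I also have to say explicitly that the coarse stage is not charged here. The lemma concerns the fine-stage working set, and the coarse stage runs over entity/relation sets, not edges; its cost is what Lemma~\ref{lem:coarse_fine_efficiency} handles.

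The main obstacle is definitional rather than technical. ``Asymptotic reduction factor'' only makes sense once I fix what is scaling: $M\to\infty$ with $H,L=O(1)$, or more generally $HL=o(M)$. I also need $L$ to be bounded independently of $M$. For that I would note that Lemma~\ref{lem:cohesion} and the threshold $\tau$ keep runs locally coherent, but bounding $L$ itself is an assumption rather than a consequence, so I will state it as a hypothesis.

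A second subtlety is the case $HL\ge M$. There the bound is vacuous: the factor is at least~$1$ and no reduction is claimed. I will note that the $\Omega$ statement is only meaningful when $HL<M$.
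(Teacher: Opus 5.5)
Your proposal is correct and follows the paper's own sketch: each shortlisted run contributes at most $L$ edges, which gives the $H\cdot L$ bound, and the reduction is meaningful once $H$ and $L$ are bounded independently of $M$. Your explicit treatment of the bound on $L$ as a hypothesis, and your note on the vacuous regime $HL\ge M$, just make precise what the paper leaves implicit.
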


\begin{proof}[Sketch]
Each candidate run contributes at most $L$ edges to be scored.
Bounding $H$ and $L$ independently of $M$ (corpus size) gives sublinear dependence on $M$.
\end{proof}

\begin{corollary}[Precision–recall / segmentation tradeoff]
\label{cor:tau}
Increasing $\tau$ (or decreasing the continuity bonus $b$) shortens runs, improves cohesion, and reduces retrieval latency at the cost of recall.
Decreasing $\tau$ lengthens runs, improves recall, but increases candidate sizes.
The boundary merge step counteracts over-segmentation by removing unstable cuts.
\end{corollary}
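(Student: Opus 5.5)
The corollary is qualitative, so I would prove it by monotonicity (comparative-statics) arguments. Each clause gets a one-line derivation from the acceptance rule, Lemma~\ref{lem:cohesion}, Lemma~\ref{lem:retrieval}, and Lemma~\ref{lem:merge-correct}. I will treat the triple stream as fixed and compare two runs of the segmenter with parameters $(\tau,b)$ and $(\tau',b')$, where $\tau'\ge\tau$ and $b'\le b$. Write the acceptance test as $\cos(\bar c,v_i)+\delta_i\ge\tau$ with $\delta_i\le b$.

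\textbf{Step 1: run lengths shrink under the stricter setting.} Let $G$ be any run produced under $(\tau',b')$. I claim it is contained in some run produced under $(\tau,b)$. The argument is an induction over the stream.
\begin{itemize}
\item Start both segmenters at the same point.
\item Whenever the strict segmenter accepts a triple, its score satisfies $\cos+\delta'\ge\tau'$. Since $\delta'\le\delta$ (the bonus is monotone in $b$) and $\tau'\ge\tau$, the lax segmenter would also have $\cos+\delta\ge\tau$, so it accepts too.
\end{itemize}
The delicate point is that the two centroids coincide only while the runs agree. So I would restrict the induction to the first run of the strict segmenter, where the prefixes and hence the centroids match, and conclude that the lax first run is at least as long. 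Iterating from each strict cut point, and using the maximality in Lemma~\ref{lem:maximal-coherent}, gives that the lax segmentation is a coarsening of the strict one. In particular $L$ is nonincreasing in $\tau$ and nondecreasing in $b$.

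\textbf{Step 2: cohesion and latency.} Lemma~\ref{lem:cohesion} gives the average pairwise cosine lower bound $\tau-b$. This bound increases with $\tau$ and decreases with $b$, which is exactly the ``improves cohesion'' clause, read as an improvement of the guaranteed bound. For latency, Lemma~\ref{lem:retrieval} bounds fine-stage work by $H\cdot L$. With $H$ fixed, Step 1 shows $L$ shrinks, so the work bound and hence the latency bound shrink.

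\textbf{Step 3: the recall cost.} Fix a reasoning chain whose edges must co-occur in one retrieved run. Under the finer segmentation, Step 1 shows that every run containing the whole chain is contained in a lax run that also contains the whole chain. The converse fails, because a chain can be split by a strict cut. So the set of chains that are fully realized inside a single run can only shrink as $\tau$ rises. With a fixed shortlist size $H$, the chains recoverable from $H$ runs shrink accordingly, and recall is nonincreasing. The opposite direction, where lower $\tau$ lengthens runs, raises recall, and increases $H\cdot L$, is the same argument read in reverse.

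\textbf{Step 4: the merge clause.} This follows from Lemma~\ref{lem:merge-correct}. Any cut that the segmenter would not reproduce when shown both sides at once is removed. Therefore spurious over-segmentation caused by a high $\tau$ is partially undone, and the only surviving boundaries are self-consistent ones.

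\textbf{Main obstacle.} I expect Step 1 to be the hardest, because the centroid depends on the run history and is not monotone in general. Lemma~\ref{lem:maximal-coherent}'s assumption that fit is monotone under centroid drift is what I would lean on. If that assumption is not enough, I would state the corollary for the first run only, or replace ``shortens runs'' with a claim about the upper bound on run length.
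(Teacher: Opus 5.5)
\paragraph{Gap in Step~1.}
Your Step~1 breaks at ``iterating from each strict cut point.'' The first-run induction is sound. From a common start, both segmenters see the same prefix and hence the same centroid, so every acceptance under $(\tau',b')$ is also an acceptance under $(\tau,b)$. But once the strict segmenter cuts at position $p$, it restarts at $p$ with a fresh centroid. The lax run straddling $p$ still carries the centroid of everything since its own start. The two states no longer agree, so there is nothing to restart the induction from, and a stale centroid can make the lax segmenter cut where the strict one accepts.

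Concretely, take $b=b'=0$, $\tau=0.45$, $\tau'=0.7$, and planar unit vectors $A,B,C,D$ at angles $0^\circ,60^\circ,100^\circ,36^\circ$.
\begin{itemize}
\item Strict: $\cos 60^\circ=0.5<0.7$ cuts before $B$. Then $\cos 40^\circ\approx 0.77$ accepts $C$. The centroid of $\{B,C\}$ lies at $80^\circ$, and $\cos 44^\circ\approx 0.72$ accepts $D$. The result is $\{A\},\{B,C,D\}$.
\item Lax: $0.5\ge 0.45$ accepts $B$. The centroid of $\{A,B\}$ lies at $30^\circ$, and $\cos 70^\circ\approx 0.34$ cuts before $C$. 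Then $\cos 64^\circ\approx 0.44$ cuts before $D$. The result is $\{A,B\},\{C\},\{D\}$.
\end{itemize}
The boundary-merge test of Lemma~\ref{lem:merge-correct} keeps every one of these cuts, so the merge step does not rescue this. The lax segmentation is not a coarsening of the strict one. The higher threshold yields fewer runs and a larger $L$ ($3$ versus $2$). A chain through $B,C,D$ lies inside a single run only under the higher threshold.

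Leaning on the monotone-drift hypothesis of Lemma~\ref{lem:maximal-coherent} does not repair this. That hypothesis compares a run with its own extensions. What you need is to compare the lax state $[s,j)$ with the strict state $[p,j)$, a proper sub-run of it. If fit can only drop as a run grows, the lax state is the worse one. The state effect and the threshold effect then point in opposite directions, and nothing follows.

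\paragraph{What this breaks and what survives.}
The latency clause of Step~2 (via $L$ in Lemma~\ref{lem:retrieval}) and the containment argument of Step~3 both rest on the coarsening claim, so neither holds pathwise. Even under coarsening, the fixed-$H$ recall step would need a separate argument, because the shortlist competitors also change with the segmentation.

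What you can prove is the following:
\begin{itemize}
\item the restart-local statement: for every position $p$, the run built from $p$ under $(\tau',b')$ is a prefix of the run built from $p$ under $(\tau,b)$;
\item the bound-level cohesion claim via Lemma~\ref{lem:cohesion};
\item the merge clause via Lemma~\ref{lem:merge-correct}.
\end{itemize}
This is essentially all the paper's two-sentence sketch asserts: ``higher thresholds cause earlier cuts,'' plus the merge rule.

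So your fallback is not optional. You can state the run-length and recall clauses at that local level, or as heuristic tendencies. Alternatively, add a hypothesis that makes cut decisions history-independent. An example is a fit depending only on the incoming triple and its predecessor; then the lax cut set is contained in the strict one, and only then do Steps~1--3 go through.
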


\begin{proof}[Sketch]
Higher thresholds cause earlier cuts; lower thresholds allow more heterogeneous content in each run.
The merge rule prunes cuts that the full-window segmenter would not reproduce.
\end{proof}

\subsubsection{Coarse retrieve}
\begin{lemma}[Efficiency of Coarse$\to$Fine with Max-Pair Filtering]
\label{lem:coarse_fine_efficiency}
Let $n$ be corpus size, $d$ the embedding dimension, and $k\ll n$ the coarse shortlist size.
The coarse stage computes, per query–candidate pair, constants over small entity/relation sets; the fine stage evaluates only $k$ items with sentence embeddings.
Thus cost drops from $O(n\!\cdot\! d)$ to $O(k\!\cdot\! d)$ while preserving precision provided $k$ retains high-overlap candidates.
\end{lemma}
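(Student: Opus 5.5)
The argument is a cost accounting of the two stages, followed by a conditional precision claim. The two parts need different tools, so I treat them separately.

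\textbf{Cost of the single-stage baseline.} I fix $n$ candidate runs, embedding dimension $d$, and shortlist size $k$. Scoring a query against every candidate with full sentence embeddings costs one $d$-dimensional inner product per item. That gives $O(n\cdot d)$, plus the one-time cost of embedding the query.

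\textbf{Cost of the coarse stage.} For each candidate $f$, $s_{\text{coarse}}(q,f)$ is a maximum over $|E(q)|\cdot|E(f)|$ entity pairs and $|R(q)|\cdot|R(f)|$ relation pairs. I will assume these set sizes are bounded by a constant $c$ independent of $n$. This is justified because runs are short: their length is bounded by $L$, as in Lemma~\ref{lem:retrieval}. The entity and relation embeddings are precomputed in $E_{\mathrm{emb}}$ and $R_{\mathrm{emb}}$, and cosines are over $d$-dimensional vectors. So each pair costs $O(c^2)$ cosine evaluations on cached vectors. If we count only the expensive sentence-encoder work, this is the "constant per pair" in the statement.

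One subtlety is that a literal scan of all $n$ candidates still costs $O(n c^2 d)$ arithmetic. I will address it by making the claim precise in one of two ways. The first is that the coarse stage is served by an ANN index over entity and relation embeddings, so that only candidates sharing a near-neighbour symbol are touched (sublinear in $n$, as in the Layer~1 construction). The second is to separate the two costs: the $O(n\cdot d)$ term is replaced by cheap symbol lookups, and the expensive encoder and matrix work drops to $O(k\cdot d)$.

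\textbf{Cost of the fine stage.} This stage runs only on $I_k$. Each candidate has at most $L$ triple lines, so building $\mathbf{S}$ costs $O(L\cdot|q|\cdot d)$ per candidate. The five scoring terms are computed on a matrix of size $|q|\times L$; even the greedy matching costs $O(|q|L\log(|q|L))$. Treating $L$ and $|q|$ as constants gives $O(k\cdot d)$, which is the claimed reduction.

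\textbf{Precision.} I will state this part conditionally. Let $T$ be the set of candidates the fine scorer would place in $\operatorname{TopM}$ if run on all $n$ items. If $T\subseteq I_k$, then fine re-ranking restricted to $I_k$ returns exactly the same $\operatorname{TopM}$, because the fine score of a candidate does not depend on which other candidates are present. Precision is then identical to the exhaustive pipeline. The step I expect to be the main obstacle is justifying $T\subseteq I_k$. For that I would argue that a high-overlap candidate (one sharing a near-identical entity and relation with the query) has a coarse score near $w_{\text{ent}}+w_{\text{rel}}$. It therefore ranks within the top $k$ whenever fewer than $k$ candidates exceed that level. This is exactly the proviso "$k$ retains high-overlap candidates," and I would leave it as a hypothesis rather than prove it unconditionally.
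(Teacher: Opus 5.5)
Your proposal is correct and takes essentially the paper's route. The paper gives no separate proof; its whole justification is in the statement itself: constant non-encoder work per query--candidate pair in the coarse stage, sentence-embedding work only on the $k$ shortlisted runs, and precision conditional on the shortlist retaining the right candidates. Your condition $T\subseteq I_k$, together with the fact that fine scores depend only on the pair $(q,f)$, is a clean formalization of that last proviso.

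Of your two ways of handling the residual $O(n\cdot d)$ coarse scan, the second (separating cached-vector arithmetic from encoder and fine-stage work) is the reading the paper intends. The ANN variant would analyse a different algorithm than the exhaustive max-pair scan described. Its sublinearity claim would also need an extra assumption: the heavy-tailed reuse behind Lemma~\ref{lem:heavy-tail} makes it plausible that hub entities occur in $\Theta(n)$ runs.
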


\subsubsection{Semantic Selection and Consensus}

\begin{lemma}[Semantic consensus is close to its members]
\label{lem:consensus}
Let runs \(r_1,\dots,r_m\) have embeddings \(\psi(r_i)\) and cosine similarity
\(\mathrm{sim}(r_a,r_b)=\cos(\psi(r_a),\psi(r_b))\).
Let \(\bar{r}\) maximize the average similarity
\[
\bar{r}\in\arg\max_r \sum_{i=1}^m \mathrm{sim}(r,r_i).
\]
If all runs in the cluster are mutually similar, i.e., \(\mathrm{sim}(r_a,r_b)\ge\theta\) for some \(\theta\) close to 1, then \(\bar{r}\) is also close to every member:
\[
1-\mathrm{sim}(r_a,\bar{r}) \le \varepsilon(\theta)
\quad\text{for all }a,
\]
for a function \(\varepsilon(\theta)\to 0\) as \(\theta\to 1\).
\end{lemma}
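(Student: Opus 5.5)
The plan is to read the $\arg\max$ as ranging over the cluster members themselves, since the consensus representative is picked from existing runs. That is, $\bar r\in\{r_1,\dots,r_m\}$. Under that reading the statement is almost immediate. Write $\bar r=r_b$ for some index $b$. The hypothesis $\mathrm{sim}(r_a,r_b)\ge\theta$ for all $a$ gives $1-\mathrm{sim}(r_a,\bar r)\le 1-\theta$ for every $a$, including $a=b$, where the left side is $0$. So $\varepsilon(\theta)=1-\theta$ works, and it tends to $0$ as $\theta\to1$. Notably, this part does not use the maximizing property of $\bar r$ at all.

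The substantive case is when $r$ ranges over a larger candidate set: arbitrary runs, or equivalently arbitrary unit vectors $u=\psi(r)$. There I would argue geometrically on the unit sphere.

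\begin{itemize}
\item First, let $v_i=\psi(r_i)/\|\psi(r_i)\|$ and use $\|x-y\|^2=2-2\cos(x,y)$. The hypothesis then says the $v_i$ have pairwise distance at most $\sqrt{2(1-\theta)}$.
\item Second, maximizing $\sum_i\cos(u,v_i)$ over unit $u$ has the explicit solution $u^\star=s/\|s\|$ with $s=\sum_i v_i$. So $\bar r$ is, up to the realizability issue below, the normalized mean direction.
\item Third, bound $\cos(v_a,u^\star)$ from below. We have $\langle v_a,s\rangle=\sum_i\cos(v_a,v_i)\ge 1+(m-1)\theta\ge m\theta$, and $\|s\|\le m$. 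Hence $\cos(v_a,u^\star)\ge\theta$, so again $\varepsilon(\theta)=1-\theta$.
\end{itemize}

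If the maximizer is only required to lie in some finite candidate set containing the members, a weaker bound still follows. Optimality gives $\sum_i\cos(\bar u,v_i)\ge\sum_i\cos(v_b,v_i)\ge m\theta$, so the average similarity of $\bar u$ to the cluster is at least $\theta$. It remains to turn this into a bound for each member $a$. Let $\mu=s/m$, so that $\langle\bar u,\mu\rangle\ge\theta$ and $\|\mu\|\le1$. Then $\|\bar u-\mu/\|\mu\|\|^2\le2-2\theta$. Also $\|v_a-\mu\|^2\le 2(1-\theta)$, since the mean is within the cluster's diameter of each point. Combining these with the triangle inequality, and with $\|\mu\|\ge\sqrt{\theta}$ to control the normalization, gives $\|\bar u-v_a\|=O(\sqrt{1-\theta})$, hence $1-\mathrm{sim}(r_a,\bar r)\le C(1-\theta)$ for an absolute constant $C$ such as $8$.

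I expect this last step to be the main obstacle. Converting an average-similarity guarantee into a guarantee for every member requires keeping the constants in the triangle-inequality chain under control. The delicate point is the normalization of $\mu$, which is why I would derive $\|\mu\|^2\ge\theta$ from the pairwise bound first. Everything else is bookkeeping.
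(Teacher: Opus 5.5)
Your argument is correct, and it proceeds differently from the paper. The paper's sketch is a qualitative perturbation argument. If $\bar r$ were far from some member $r_a$, then, since every member is close to $r_a$, moving $\bar r$ toward $r_a$ would raise its similarity to the whole cluster and contradict maximality. The function $\varepsilon(\theta)$ is left to ``standard geometric arguments.'' That reasoning tacitly needs a continuous candidate space, so that $\bar r$ can be moved, and it never quantifies ``far.''

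You instead split on the domain of the $\arg\max$, which the statement leaves unspecified:
\begin{itemize}
\item For members only, the claim is immediate with $\varepsilon(\theta)=1-\theta$, and maximality is not used.
\item Over the whole sphere, you solve the problem in closed form, $u^\star=s/\|s\|$, and bound $\langle v_a,s\rangle/\|s\|\ge\theta$ directly.
\item For a candidate set containing the members, you replace ``move toward $r_a$'' with a comparison against a member, $\langle\bar u,s\rangle\ge\sum_i\cos(v_b,v_i)\ge m\theta$. This is the discrete, quantitative form of the paper's idea. It applies to the discrete selection of a representative in Step 4, where the perturbation argument does not apply literally.
\end{itemize}
Your route gives explicit constants and covers the discrete case. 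The paper's perturbation view does not rely on a closed-form maximizer, which would matter for similarities other than cosine, but as written it yields no explicit bound.

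The step you flag as the main obstacle is fine. Your chain gives $\|\bar u-v_a\|\le 2\sqrt{2(1-\theta)}+(1-\theta)$, hence $C=(2\sqrt2+1)^2/2<8$.

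It can also be shortened. Since $\langle\bar u,u^\star\rangle=\langle\bar u,s\rangle/\|s\|\ge\theta$, you get $\|\bar u-u^\star\|\le\sqrt{2(1-\theta)}$. Your sphere case gives $\|u^\star-v_a\|\le\sqrt{2(1-\theta)}$. The triangle inequality then yields $1-\mathrm{sim}(r_a,\bar r)\le 4(1-\theta)$, with no need for $\|\mu\|\ge\sqrt\theta$.

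The only tacit assumptions are $\theta\ge 0$, used when dividing by $\|s\|\le m$, and nonzero embeddings. Both are harmless for $\theta$ near $1$.
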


\begin{proof}[Sketch]
On the unit sphere, cosine similarity defines a bounded distance. If \(\bar{r}\) were far from some member \(r_a\) while all runs are mutually close, moving \(\bar{r}\) toward \(r_a\) would increase its average similarity to the cluster, contradicting maximality. The bound \(\varepsilon(\theta)\) follows from standard geometric arguments.
\end{proof}

\subsubsection{Consolidation of Entities and Edge Sequences}

\begin{lemma}[Quotient consolidation reduces edge cardinality]
\label{lem:quotient}
Let $G=(V,R,E)$ be a directed multigraph with labeled edges $E\subseteq V\times R\times V$.
Let $\sim$ be the equivalence relation on $V$ induced by entity consolidation and $\pi:V\to V/{\sim}$ the projection.
Define $\phi:E\to E'$ by
$\phi(u,r,v) = (\pi(u),r,\pi(v))$ and let $E'=\mathrm{uniq}(\phi(E))$.
Then:
\begin{enumerate}
  \item $|E'|\le |E|$, with equality iff $\phi$ is injective on $E$.
  \item For any edge sequence $\sigma=(e_{i_1},\dots,e_{i_T})$, the remapped-and-deduped sequence
  $\sigma' = \mathrm{uniq}(\phi(\sigma))$ satisfies $|\sigma'|\le|\sigma|$.
\end{enumerate}
\end{lemma}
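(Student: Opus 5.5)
The plan is to treat $\mathrm{uniq}$ as the operation that takes a finite family (or sequence) and returns its set of distinct elements, so that $E'=\phi(E)$ is simply the image of $E$ under $\phi$. Both claims then reduce to the elementary fact that the image of a finite set under a map has at most as many elements as the domain. In particular, no property of $\sim$ is needed beyond $\pi$ being a well-defined function; we note only that $E$ is finite, since it sits inside the stored meta-codebook $\mathsf{M}$.

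For part~1, we view $\phi:E\to\phi(E)$ as a surjection. A surjection between finite sets satisfies $|\phi(E)|\le|E|$, since choosing one preimage for each image point gives an injection $\phi(E)\hookrightarrow E$.

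For the equality case, we argue both directions separately:
\begin{itemize}
\item If $\phi$ is injective on $E$, then $\phi$ is a bijection onto its image, so $|E'|=|E|$.
\item Conversely, suppose two distinct edges $e\neq f$ satisfy $\phi(e)=\phi(f)$. Then the fibres of $\phi$ partition $E$ into $|E'|$ nonempty blocks, and at least one block has size $\ge 2$. Hence $|E|=\sum_{e'\in E'}|\phi^{-1}(e')|>|E'|$.
\end{itemize}
This is the pigeonhole principle.

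An interpretive point must be fixed here. $E$ is a multigraph edge set, but the statement writes $E\subseteq V\times R\times V$. If $E$ really carried multiplicities, "injective" would have to be read on edge occurrences, and $\mathrm{uniq}$ could already collapse parallel copies. We therefore adopt the subset reading, under which the iff holds exactly as stated. This choice of reading is the main thing to pin down; once it is fixed, the proof is routine.

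For part~2, write $\phi(\sigma)=(\phi(e_{i_1}),\dots,\phi(e_{i_T}))$, which has length exactly $T=|\sigma|$. The operation $\mathrm{uniq}$ on a sequence keeps the first occurrence of each element in order, so it returns a subsequence; a subsequence has length at most that of the original. Hence
\[
|\sigma'|=|\mathrm{uniq}(\phi(\sigma))|\le|\phi(\sigma)|=|\sigma|.
\]
Equivalently, $|\sigma'|$ equals the number of distinct values among $\phi(e_{i_1}),\dots,\phi(e_{i_T})$, which is at most $T$. If instead one kept duplicates (no $\mathrm{uniq}$), the length would be preserved exactly, so the inequality holds under either convention.
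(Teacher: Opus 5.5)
Your proposal is correct and takes the same route as the paper's sketch: $E'$ is the deduplicated image of $E$ under $\phi$, so it cannot be larger, and the sequence claim follows pointwise because collapsing repeated edges cannot increase length. Your version is more complete than the paper's, since you prove the ``equality iff injective'' direction via the fibre count and state the finiteness of $E$ that this step needs; the paper leaves both implicit.
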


\begin{proof}[Sketch]
$E'$ is the image of $E$ under $\phi$ followed by deduplication, so it cannot have more elements.
Sequences inherit this property pointwise; collapsing equal edges cannot increase length.
\end{proof}

\begin{lemma}[Vectorization cost is preserved]
\label{lem:vec-cost}
Let $v_s$ be a schema vectorizer depending only on symbolic indices and precomputed
vectors $(E_{\mathrm{emb}},R_{\mathrm{emb}})$.
Then the number of sentence-level text encodings required by the pipeline is unchanged by applying $\phi$ and deduplicating edges.
\end{lemma}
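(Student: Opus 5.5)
The plan is to argue directly from the definition of the schema vectorizer $v_s$ and from how sentence-level text encodings enter the pipeline. First I would make precise what counts as a sentence-level text encoding. It is a call to the text encoder on a raw string, namely a question, an answer, a fact span, or a linearized triple line. These calls happen when a span $y$ is ingested (parsing via $\tau(y)$, then \textsc{Indexify}) or when a fine-stage line is embedded. The set of such calls is therefore indexed by the text spans and query-time lines. It is not indexed by the edge set $\mathsf{M}$ itself.

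Second, I would observe that $v_s$ factors through the symbolic indices. For an edge $(u,r,v)$ the schema vector is a fixed function of $E_{\mathrm{emb}}(u)$, $R_{\mathrm{emb}}(r)$ and $E_{\mathrm{emb}}(v)$. After consolidation the edge becomes $\phi(u,r,v)=(\pi(u),r,\pi(v))$, with $\pi(u)$ chosen as the medoid $m_E(u)$. Its vector is then obtained by \emph{looking up} the already-stored vectors of the representatives $m_E(u),m_E(v)\in E$. These representatives are existing entities, so their embeddings were computed before consolidation, and no new string is passed to the encoder. Deduplication via $\mathrm{uniq}$ only deletes entries. By Lemma~\ref{lem:quotient}, it can only shrink $E'$ and every sequence $\sigma'$, so no step of the remap creates a new encoding request.

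Third, I would check that the stored runs and the Q/A/F stores are remapped purely by ID substitution. This follows because they are elements of $\mathsf{M}^\star$, so remapping them also invokes no encoder. The sentence-level encodings of the original spans are never recomputed, because the spans themselves are unchanged. Putting these together, the multiset of text-encoder calls is identical with and without applying $\phi$. This gives the claim, and in fact an inequality in the favorable direction for any downstream per-edge work.

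The main obstacle is the fine stage. There, linearized triple lines $h~\rho~t$ are embedded as text, and after remapping a line may read differently because $h$ has been replaced by its medoid's surface form. I would handle this by restricting the count to encodings required by the pipeline per query. The number of lines per candidate run is $|\sigma'|\le|\sigma|$ by Lemma~\ref{lem:quotient}, so the count of fine-stage encodings cannot increase. I would also state explicitly that "unchanged" refers to the build-time schema vectorization, which depends only on indices and precomputed tables, as the hypothesis on $v_s$ stipulates.
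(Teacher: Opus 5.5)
Your first three steps are the paper's argument, stated more carefully. Consolidation only changes which IDs point to which vectors, and every sentence-level encoding was done before consolidation. The representatives are medoids $m_E(u)\in E$ whose vectors already sit in $E_{\mathrm{emb}}$, a detail the paper leaves implicit.

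The problem is your last paragraph. The paper's sketch counts the linearized triple lines among the pre-consolidation encodings (``for triples and chunks''). A remapped edge $(\pi(u),r,\pi(v))$ then reuses an already computed vector instead of re-reading its new surface string. That is exactly what the hypothesis on $v_s$ licenses, and it gives equality for the fine stage too. You instead concede that remapped lines get re-encoded. This contradicts your previous paragraph's conclusion that the multiset of encoder calls is identical. You then retreat to a non-increase claim and redefine ``unchanged'' to mean build-time vectorization only, which is weaker than what the lemma asserts.

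The fallback inequality also does not follow from $|\sigma'|\le|\sigma|$. That bound compares each run with its own image. But consolidation changes the entity embeddings pooled in $s_{\text{coarse}}$, so the post-consolidation shortlist $I'_k$ need not be the image of $I_k$. If $I'_k$ picks up longer runs, $\sum_{f\in I'_k}|\sigma'_f|$ can exceed $\sum_{f\in I_k}|\sigma_f|$, and the number of fine-stage lines goes up. Pointwise shrinkage only controls the worst case $k\cdot L$, not the actual count.

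The fix is to drop the concession rather than patch the inequality. Apply the $v_s$ hypothesis (equivalently, the paper's caching convention) to the fine stage as well. Then the fine stage consumes only precomputed vectors, and the set of encoder calls is literally the same with and without $\phi$.
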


\begin{proof}[Sketch]
Consolidation changes only which indices point to which vectors.
All sentence encodings are done before consolidation (for triples and chunks), so later index manipulation reuses them.
\end{proof}

\begin{lemma}[Medoid representatives minimize within-cluster distortion]
\label{lem:medoid}
Let $C\subseteq V$ be a cluster of entities and define cosine dissimilarity
$d(\mathbf{x},\mathbf{y}) = 1 - \cos(\mathbf{x},\mathbf{y})$.
A medoid $r^\star\in C$ satisfies
\[
  r^\star \in \arg\min_{r\in C}\; \sum_{i\in C} d(\mathbf{e}_i,\mathbf{e}_r),
\]
and for any other representative $r\in C$,
\[
  \sum_{i\in C} d(\mathbf{e}_i,\mathbf{e}_{r^\star})
  \;\le\;
  \sum_{i\in C} d(\mathbf{e}_i,\mathbf{e}_r).
\]
\end{lemma}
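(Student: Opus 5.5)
The statement is essentially definitional, so the plan is to unpack the definition of a medoid and observe that the displayed inequality is exactly what membership in the $\arg\min$ means. The only genuine content is existence: $\arg\min_{r\in C}$ is nonempty. That requires $C$ to be finite and nonempty, and $d$ to be well defined on the cluster.

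\textbf{Step 1: well-definedness.} Assume every entity embedding $\mathbf{e}_i$ is nonzero, so that $\cos(\mathbf{e}_i,\mathbf{e}_r)=\langle \mathbf{e}_i,\mathbf{e}_r\rangle/(\|\mathbf{e}_i\|\|\mathbf{e}_r\|)$ is defined. Then $d(\mathbf{x},\mathbf{y})=1-\cos(\mathbf{x},\mathbf{y})\in[0,2]$. The pipeline stores (or can unit-normalize) entity embeddings, so nonzero vectors are a harmless standing assumption. Define the distortion
\[
D(r)=\sum_{i\in C} d(\mathbf{e}_i,\mathbf{e}_r),
\]
which is a real number in $[0,2|C|]$ for each $r\in C$.

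\textbf{Step 2: existence.} The entity dictionary $E$, and hence any cluster $C\subseteq V$ produced by the Layer~2 $k$-means, is finite; take $C$ nonempty. A real-valued function on a finite nonempty set attains its minimum, so $\arg\min_{r\in C}D(r)\neq\emptyset$ and a medoid $r^\star$ exists. If the minimizer is not unique, any tie-breaking rule (for example, smallest ID) selects one.

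\textbf{Step 3: the inequality.} By definition of $\arg\min$, $r^\star$ satisfies $D(r^\star)\le D(r)$ for every $r\in C$, which is exactly the displayed inequality.

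A possible remark is that $d$ is not a metric, since the triangle inequality fails for $1-\cos$. This does not matter here, because minimality over a finite candidate set never uses metric properties. The only care needed is therefore in Step~1 (nonzero embeddings and nonempty $C$); after that the claim follows in one line. I expect no real obstacle.
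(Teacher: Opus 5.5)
Your proposal is correct and follows the paper's own argument, which simply observes that a medoid minimizes total within-cluster dissimilarity by definition, so the inequality is just membership in the $\arg\min$. Your extra checks (nonzero embeddings so that $d$ is defined, and $C$ finite and nonempty so the minimum is attained) spell out existence assumptions that the paper's sketch leaves implicit.
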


\begin{proof}[Sketch]
This is the standard property of medoids: by definition they minimize total dissimilarity within the cluster.
\end{proof}

\subsubsection{DPO Wrapper and Policy Behavior}

\begin{lemma}[DPO aligns policy log-odds with utilities]
\label{lem:dpo-utility}
Assume preferences follow a Bradley–Terry model:
\[
  \Pr(y^+\succ y^- \mid x)
  = \sigma\!\big(\lambda\,[U(x,y^+)-U(x,y^-)]\big)
\]
for some $\lambda>0$, where $U$ is a latent utility.
Let $\pi_{\mathrm{ref}}$ be fixed.
Then any minimizer $\pi_\theta$ of the DPO objective satisfies, up to a normalization constant $C_x$,
\[
\begin{aligned}
\log \pi_\theta(y\mid x) &- \log \pi_\theta(y'\mid x) \\
&= \tfrac{\lambda}{\beta_{\mathrm{dpo}}}\big[U(x,y)-U(x,y')\big]\\
&\quad+ \Delta_{\mathrm{ref}}(y,y'\mid x) + C_x.
\end{aligned}
\]
where $\Delta_{\mathrm{ref}}$ depends only on $\pi_{\mathrm{ref}}$.
\end{lemma}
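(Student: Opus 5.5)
The plan is to use the standard DPO derivation: solve the KL-regularized reward-maximization problem in closed form, then show that the DPO loss is minimized exactly when the policy's implied reward matches the Bradley--Terry latent reward $\lambda U$, up to terms that depend only on $x$.

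\textbf{Step 1: optimal policy for a fixed reward.} Recall that DPO is derived from the objective
\[
\max_\pi \; \mathbb{E}_{y\sim\pi(\cdot\mid x)}[r(x,y)] - \beta_{\mathrm{dpo}}\,\mathrm{KL}\big(\pi(\cdot\mid x)\,\|\,\pi_{\mathrm{ref}}(\cdot\mid x)\big).
\]
For a fixed reward $r$, its maximizer is the Gibbs policy
\[
\pi_r(y\mid x)=\pi_{\mathrm{ref}}(y\mid x)\exp\!\big(r(x,y)/\beta_{\mathrm{dpo}}\big)/Z(x).
\]
Inverting this gives
\[
r(x,y)=\beta_{\mathrm{dpo}}\log\frac{\pi_r(y\mid x)}{\pi_{\mathrm{ref}}(y\mid x)}+\beta_{\mathrm{dpo}}\log Z(x).
\]
Substituting into the Bradley--Terry likelihood, $Z(x)$ cancels in pairwise differences. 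The DPO loss is then the logistic cross-entropy
\[
-\mathbb{E}\Big[\log\sigma\big(\hat r_\theta(x,y^+)-\hat r_\theta(x,y^-)\big)\Big],
\qquad
\hat r_\theta=\beta_{\mathrm{dpo}}\log(\pi_\theta/\pi_{\mathrm{ref}}).
\]

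\textbf{Step 2: population minimizer.} Since the data follow the Bradley--Terry model $\Pr(y^+\succ y^-\mid x)=\sigma(\lambda[U(x,y^+)-U(x,y^-)])$, I will apply the standard fact that the expected logistic loss is minimized pairwise. For each comparable pair $(y,y')$ the loss is a binary cross-entropy in the single scalar $\hat r_\theta(x,y)-\hat r_\theta(x,y')$. Its unique minimizer, by strict convexity of $t\mapsto -p\log\sigma(t)-(1-p)\log\sigma(-t)$, is $t=\sigma^{-1}(p)$. This yields
\[
\hat r_\theta(x,y)-\hat r_\theta(x,y')=\lambda\,[U(x,y)-U(x,y')].
\]

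\textbf{Step 3: rearrange.} Dividing by $\beta_{\mathrm{dpo}}$ and expanding $\hat r_\theta$ gives
\[
\log\pi_\theta(y\mid x)-\log\pi_\theta(y'\mid x)=\tfrac{\lambda}{\beta_{\mathrm{dpo}}}[U(x,y)-U(x,y')]+\Delta_{\mathrm{ref}}(y,y'\mid x),
\]
where
\[
\Delta_{\mathrm{ref}}(y,y'\mid x)=\log\pi_{\mathrm{ref}}(y\mid x)-\log\pi_{\mathrm{ref}}(y'\mid x),
\]
which depends only on $\pi_{\mathrm{ref}}$, as claimed. The constant $C_x$ absorbs any $x$-only slack, for instance from normalization in a parameterization where the scores are logits; in exact form it is $0$.

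\textbf{Main obstacle.} The hard part is Step 2's identifiability. Pairwise optimality only pins down differences on pairs that are actually compared. To extend this to all $(y,y')$ in the finite action lattice, I will assume the comparison graph over selector configurations is connected for each $x$ and chain the differences along paths; consistency of the chained values holds because they derive from a single potential $U$. I also need the policy class to be rich enough to realize the Gibbs form, i.e.\ realizability of the tabular categorical policy $\pi_\theta$, which I will state as an assumption. This is natural here because the action space of include all / unique / not include per channel is finite.
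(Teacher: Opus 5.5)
Your proposal is correct and follows the same route as the paper's sketch. The DPO loss is a logistic likelihood in the reference-corrected implied rewards, and your first-order condition $\sigma(t)=p$ is exactly the ``gradient stationarity'' the paper invokes; it yields the identity with $\Delta_{\mathrm{ref}}=\log\pi_{\mathrm{ref}}(y\mid x)-\log\pi_{\mathrm{ref}}(y'\mid x)$ and $C_x=0$. Your extra hypotheses (a connected comparison graph with positive weights, a tabular policy class containing the Gibbs target, and full support of $\pi_{\mathrm{ref}}$) are ones the paper's sketch uses silently. Note that the single potential $\lambda U$, which makes the pairwise optima jointly attainable, is also what justifies minimizing the summed loss pair by pair, not only the chaining step.
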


\begin{proof}[Sketch]
DPO maximizes the conditional log-likelihood of observed pairwise preferences under a logistic link, with a reference correction.
At optimum, gradient stationarity enforces proportionality between policy log-odds and utility differences, offset by the fixed reference.
\end{proof}

\begin{lemma}[Monotone token control via action lattice]
\label{lem:token-budget}
Suppose each channel’s action set
$\{\texttt{include all},\texttt{unique},\texttt{not include}\}$ forms a lattice under
$\succeq$ with
\[
  \texttt{include all} \;\succeq\; \texttt{unique}
  \;\succeq\; \texttt{not include},
\]
such that $\text{Tokens}(x,y)$ is monotone nonincreasing down the lattice and
$\text{Acc}(x,y)$ is Lipschitz in a task metric.
Then there exists a Lagrange multiplier $\eta^\star\ge 0$ such that the DPO-trained policy with penalty $\eta^\star\cdot\text{Tokens}$ attains a target budget $B$, and tighter budgets $B'<B$ can be achieved by increasing $\eta$ (eventually collapsing to always \texttt{not include}).
\end{lemma}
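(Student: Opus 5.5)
The argument treats token control as a constrained choice over a finite action space and uses the DPO characterization of Lemma~\ref{lem:dpo-utility} to move the constraint into the utility. For fixed $x$, the joint action $y$ ranges over the finite product lattice $\{\texttt{include all},\texttt{unique},\texttt{not include}\}^{c}$, one coordinate per channel. The product order makes $\text{Tokens}(x,\cdot)$ monotone nonincreasing downward. The bottom element $y_\bot=(\texttt{not include},\dots)$ minimizes tokens.

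\textbf{Step 1: penalized utility.} For $\eta\ge0$ I define $U_\eta(x,y)=U(x,y)-\eta\,\text{Tokens}(x,y)$. This is the utility of Sec.~\ref{sec:dpo} with $\beta$ replaced by $\beta+\eta$.

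\textbf{Step 2: optimal policy under the penalty.} Applying Lemma~\ref{lem:dpo-utility} to $U_\eta$ gives the DPO minimizer as a Gibbs tilt of the reference policy:
\[
\pi_\eta(y\mid x)\propto\pi_{\mathrm{ref}}(y\mid x)\exp\!\big(\tfrac{\lambda}{\beta_{\mathrm{dpo}}}U_\eta(x,y)\big).
\]
The expected token count is
\[
T(\eta)=\mathbb{E}_x\,\mathbb{E}_{\pi_\eta}[\text{Tokens}].
\]

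\textbf{Step 3: monotonicity of $T$.} Differentiating in $\eta$ gives
\[
T'(\eta)=-\tfrac{\lambda}{\beta_{\mathrm{dpo}}}\,\mathbb{E}_x\operatorname{Var}_{\pi_\eta}(\text{Tokens})\le0 .
\]
This is the standard exponential-family identity. Hence $T$ is continuous and nonincreasing on $[0,\infty)$.

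\textbf{Step 4: limits of $T$.} As $\eta\to\infty$, the policy $\pi_\eta$ concentrates on the token minimizers. By lattice monotonicity, these minimizers include $y_\bot$. If token counts are strictly monotone down the lattice, the limit is exactly $y_\bot$, which gives the ``collapse to \texttt{not include}'' claim.

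\textbf{Step 5: attaining the budget.} For any budget $B\in[T(\infty),T(0)]$, the intermediate value theorem gives an $\eta^\star$ with $T(\eta^\star)=B$. If $B\ge T(0)$, I take $\eta^\star=0$. This $\eta^\star$ is the Lagrange multiplier of the problem $\max\mathbb{E}\,U$ subject to $\mathbb{E}\,\text{Tokens}\le B$ over the convex set of stochastic policies. Linear programming duality over a finite simplex gives strong duality and complementary slackness, which identifies $\eta^\star$ as the dual optimum. Since $T$ is monotone, any $B'<B$ requires $\eta'\ge\eta^\star$.

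\textbf{Role of the Lipschitz assumption.} The Lipschitz condition on Acc is used only to keep $U$ bounded and measurable across feature space. This makes the expectations finite and justifies exchanging derivative and expectation.

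\textbf{Main obstacle.} The hard step is Step 2. Lemma~\ref{lem:dpo-utility} describes the population minimizer, not the trained network. I would therefore state the result for the realizable, infinite-preference-data limit, where $\pi_\theta=\pi_\eta$ exactly.

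A secondary issue is degeneracy. When $\operatorname{Var}(\text{Tokens})=0$, $T$ can be flat, so $\eta^\star$ need not be unique. When budgets fall between discrete token levels, exact attainment needs stochastic policies. I would handle both cases by allowing randomization, which the Gibbs form already provides.
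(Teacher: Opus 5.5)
\textbf{Step 5 misidentifies the multiplier.} Your budget-attainment argument (Steps 2--4 plus the intermediate-value step) holds under the realizability assumption you flag. The error is in Step 5, where you identify $\eta^\star$ as the LP dual optimum.

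The Gibbs policy $\pi_\eta\propto\pi_{\mathrm{ref}}\exp(\kappa U_\eta)$, with $\kappa=\lambda/\beta_{\mathrm{dpo}}$, maximizes the \emph{KL-regularized} Lagrangian $\mathbb{E}_\pi[U-\eta\,\text{Tokens}]-\kappa^{-1}\mathrm{KL}(\pi\,\|\,\pi_{\mathrm{ref}})$. It does not maximize the linear Lagrangian of the LP $\max_\pi\mathbb{E}_\pi U$ subject to $\mathbb{E}_\pi\text{Tokens}\le B$. That LP's dual optimum is the slope of the upper concave frontier of achievable $(\text{Tokens},U)$ pairs, and the two multipliers differ in general.

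Take two actions $a,b$ with $\text{Tokens}=(10,0)$, $U=(1,0)$, uniform $\pi_{\mathrm{ref}}$ and $\kappa=1$. The LP multiplier is $0.1$ for every $B\in(0,10)$. Solving $T(\eta)=B$ instead gives $\eta^\star=(1-\ln\tfrac{B}{10-B})/10$, which is $\approx0.24$ at $B=2$. Your choice $\eta^\star=0$ for $B=8\ge T(0)\approx7.31$ is not LP-dual-optimal either. If you add a third action with $10$ tokens and $U=0.5$, $\pi_{\eta^\star}$ is not even LP-optimal, because it keeps mass on a dominated action.

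The repair stays inside your framework. $\eta^\star$ is a genuine Lagrange multiplier for the regularized problem $\max_\pi\mathbb{E}_\pi U-\kappa^{-1}\mathrm{KL}(\pi\,\|\,\pi_{\mathrm{ref}})$ subject to $\mathbb{E}\,\text{Tokens}\le B$. At $\eta^\star\ge0$, $\pi_{\eta^\star}$ maximizes that problem's Lagrangian, is feasible, and satisfies complementary slackness ($T(\eta^\star)=B$, or $\eta^\star=0$). Weak duality then certifies optimality, so no LP duality is needed.

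Two smaller corrections:
\begin{itemize}
\item At any finite $\eta$ the Gibbs policy keeps positive mass on all of $\operatorname{supp}\pi_{\mathrm{ref}}(\cdot\mid x)$. So $B=T(\infty)$ is reached only as $\eta\to\infty$, and your budget interval should be half-open.
\item The collapse to $y_\bot$ also needs $\pi_{\mathrm{ref}}(y_\bot\mid x)>0$.
\end{itemize}

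\textbf{Comparison with the paper's route.} The paper's sketch is the unregularized convex-hull argument. Mixed policies over the finite action set give a convex set of achievable (tokens, accuracy) pairs, and a standard Lagrangian argument yields a multiplier for each feasible budget. Exact attainment between frontier vertices implicitly relies on randomizing at the critical multiplier, where the penalized argmax is set-valued. It never uses the Gibbs form of Lemma~\ref{lem:dpo-utility}.

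Your soft route gets continuity and monotonicity of $T(\eta)$ directly from $T'(\eta)=-\kappa\,\mathbb{E}_x\operatorname{Var}_{\pi_\eta}(\text{Tokens})$. Attainment then becomes a plain intermediate-value argument, and the policy matches what DPO actually outputs. The price is the realizability, infinite-data and full-support assumptions, and a multiplier that belongs to the KL-regularized problem rather than the original one. Note also that your argument uses the lattice order only for the collapse claim in Step 4; Steps 3 and 5 hold for any token function.
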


\begin{proof}[Sketch]
On a finite action set, the mixed policy over actions yields a convex set of achievable (tokens, accuracy) pairs.
A standard Lagrangian argument with a monotonically ordered action set gives existence of a multiplier realizing each feasible budget, and increasing the penalty pushes mass toward cheaper actions.
\end{proof}

\subsection{Effect of Token Length and Benefits of Entity--Relation Factorization}
\label{sec:token-length-er-factorization}

\paragraph{Sequence embeddings.}
For simplicity, we approximate the embedding of a text span $S$ of length $n$ tokens by the mean of its token embeddings:
\[
  z(S) \approx \frac{1}{n}\sum_{i=1}^n x_i \in \mathbb{R}^d.
\]
Each token embedding decomposes into
\[
  x_i = s_i + \varepsilon_i,
\]
where $s_i$ is the semantic signal and $\varepsilon_i$ is zero-mean noise with
$\mathbb{E}[\varepsilon_i] = 0$ and $\mathrm{Var}(\langle q, \varepsilon_i\rangle) = \sigma^2$
for any unit query vector $q$.

We partition the tokens into:
(i) relevant tokens $R$ (carry information the query cares about) and
(ii) irrelevant tokens $I$ (background, boilerplate, narration), with $|R| = m$ and $|I| = n-m$.

\begin{lemma}[Token dilution]
\label{lem:token-dilution}
Let $q$ be a unit query vector aligned with the average relevant signal
\[
  \mu_{\mathrm{rel}} := \frac{1}{m}\sum_{i \in R} s_i
  \quad\text{with}\quad
  \langle q, \mu_{\mathrm{rel}} \rangle = \alpha > 0.
\]
Assume irrelevant tokens have no systematic alignment with $q$, i.e.,
$\mathbb{E}[\langle q, s_i \rangle] = 0$ for $i \in I$.
Then the signal-to-noise ratio (SNR) of the passage embedding in the direction $q$ decays as
\[
  \mathrm{SNR}(S)
  := \frac{\big(\mathbb{E}[\langle q, z(S)\rangle]\big)^2}
          {\mathrm{Var}(\langle q, z(S)\rangle)}
  \propto \frac{1}{n}.
\]
\end{lemma}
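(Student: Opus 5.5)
The plan is a direct mean–variance computation for $\langle q, z(S)\rangle$, splitting the sum over tokens into the relevant set $R$ and the irrelevant set $I$. By linearity,
\[
  \langle q, z(S)\rangle = \frac{1}{n}\Big(\sum_{i\in R}\langle q,s_i\rangle + \sum_{i\in I}\langle q,s_i\rangle + \sum_{i=1}^n \langle q,\varepsilon_i\rangle\Big).
\]

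\textbf{Numerator.} Taking expectations term by term:
\begin{itemize}
  \item The first sum equals $m\,\langle q,\mu_{\mathrm{rel}}\rangle = m\alpha$ by the definition of $\mu_{\mathrm{rel}}$.
  \item The second sum has zero mean by the no-alignment assumption.
  \item The noise sum has zero mean because $\mathbb{E}[\varepsilon_i]=0$.
\end{itemize}
Hence $\mathbb{E}[\langle q,z(S)\rangle] = m\alpha/n$, and the numerator of the SNR is $m^2\alpha^2/n^2$.

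\textbf{Denominator.} I will treat the relevant signals $s_i$, $i\in R$, as fixed. I will take the noise terms $\varepsilon_i$ to be mutually uncorrelated and uncorrelated with the irrelevant signals. The statement does not make these assumptions explicit, so I will state them as part of the model. The variance then splits as
\[
\mathrm{Var}(\langle q,z(S)\rangle) = \frac{1}{n^2}\Big(n\sigma^2 + \mathrm{Var}\big(\textstyle\sum_{i\in I}\langle q,s_i\rangle\big)\Big).
\]
In the simplest reading, irrelevant signals are deterministic with zero projection, or are themselves absorbed into the noise. Either way the variance is $\sigma^2/n$, possibly with $\sigma^2$ replaced by a constant $\sigma'^2$. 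If instead the irrelevant projections are independent with variance $\tau^2$, the variance becomes $(n\sigma^2+(n-m)\tau^2)/n^2$, which is still $\Theta(1/n)$ for fixed $m$.

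\textbf{Ratio.} Dividing gives
\[
\mathrm{SNR}(S) = \frac{m^2\alpha^2/n^2}{\Theta(1/n)} = \Theta\!\Big(\frac{m^2\alpha^2}{n}\Big),
\]
so for a fixed relevant content $(m,\alpha)$ the SNR is proportional to $1/n$. This is how I will interpret the ``$\propto$'' in the statement: $m$, $\alpha$ and $\sigma$ are held fixed while irrelevant tokens are added. I will state this explicitly, since if $m$ grew linearly with $n$ the SNR would grow rather than decay.

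\textbf{Main obstacle.} The only delicate step is the variance term, because the lemma as stated does not specify the covariance structure of the noise. Correlated noise could make the variance scale like $1/n^0$ instead of $1/n$, which would break the claimed rate. I will therefore add a short remark that the result holds under pairwise uncorrelated (or bounded-correlation) noise. Everything else is routine linearity.
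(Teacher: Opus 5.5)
Your proposal is correct and is essentially the paper's proof: split $\langle q, z(S)\rangle$ into relevant, irrelevant and noise contributions, get mean $m\alpha/n$, and use independent equal-variance noise to get variance $\sigma^2/n$, hence $\mathrm{SNR}(S) = \frac{m^2\alpha^2}{\sigma^2}\cdot\frac{1}{n}$. The paper tacitly treats the signals as deterministic and the noise as independent, which are exactly the assumptions you make explicit; note only that in your $\tau^2>0$ variant the SNR is $m^2\alpha^2/(n\sigma^2+(n-m)\tau^2)$, which is $\Theta(1/n)$ but not literally $\propto 1/n$, so exact proportionality relies on your first reading.
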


\begin{proof}
We have
\[
  z(S) = \frac{1}{n}\sum_{i=1}^n x_i
       = \frac{1}{n}\sum_{i\in R} (s_i + \varepsilon_i)
         + \frac{1}{n}\sum_{i\in I} (s_i + \varepsilon_i).
\]
Taking inner product with $q$ and expectation, and using
$\mathbb{E}[\langle q, \varepsilon_i\rangle] = 0$ for all $i$
and $\mathbb{E}[\langle q, s_i\rangle] = 0$ for $i \in I$,
\[
  \mathbb{E}[\langle q, z(S)\rangle]
  = \frac{1}{n}\sum_{i\in R} \langle q, s_i\rangle
  = \frac{m}{n}\,\alpha.
\]
Thus, for fixed $m$ and $\alpha$, the expected signal scales as $m\alpha/n$.

For the variance, by independence and identical variance of the noise:
\[
  \mathrm{Var}(\langle q, z(S)\rangle)
  = \mathrm{Var}\Big(\frac{1}{n}\sum_{i=1}^n \langle q, \varepsilon_i\rangle\Big)
\]
\[
  = \frac{1}{n^2}\sum_{i=1}^n \mathrm{Var}(\langle q, \varepsilon_i\rangle)
  = \frac{\sigma^2}{n}.
\]
Therefore
\[
  \mathrm{SNR}(S)
  = \frac{(m\alpha/n)^2}{\sigma^2/n}
  = \frac{m^2 \alpha^2}{\sigma^2} \cdot \frac{1}{n},
\]
which shows $\mathrm{SNR}(S) \propto 1/n$ as claimed.
\end{proof}

\begin{corollary}[Length bias]
\label{cor:length-bias}
Consider two passages $S_{\text{short}}$ and $S_{\text{long}}$ that contain the same $m$ relevant tokens (same fact, same $\alpha$) but have lengths $n_s$ and $n_\ell$ with $n_\ell > n_s$.
Then
\begin{equation*}
\scalebox{0.95}{$
  \mathbb{E}[\langle q, z(S_{\text{short}})\rangle]
  = \frac{m}{n_s}\alpha
  \;>\;
  \frac{m}{n_\ell}\alpha
  = \mathbb{E}[\langle q, z(S_{\text{long}})\rangle]
$}
\end{equation*}

Thus, even for equally relevant content, longer passages tend to produce
lower expected similarity to $q$ and are systematically disadvantaged in retrieval.
\end{corollary}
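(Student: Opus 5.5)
The proof is a direct specialization of the computation in Lemma~\ref{lem:token-dilution}. Only the expectation half of that lemma is needed; the variance term plays no role here. I would apply the lemma's expected-signal identity separately to each passage and then compare the two resulting expressions.

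\textbf{Step 1: expected signal of each passage.} Fix $S_{\text{short}}$ and $S_{\text{long}}$. Each passage contains the same set $R$ of $m$ relevant tokens, with $\langle q,\mu_{\mathrm{rel}}\rangle=\alpha$. Its remaining tokens are irrelevant: $n_s-m$ of them for $S_{\text{short}}$ and $n_\ell-m$ for $S_{\text{long}}$. Under the mean-pooling model $z(S)=\frac1n\sum_i x_i$ and the hypotheses of Lemma~\ref{lem:token-dilution}, linearity of expectation together with $\mathbb{E}[\langle q,\varepsilon_i\rangle]=0$ and $\mathbb{E}[\langle q,s_i\rangle]=0$ for $i\in I$ gives
\[
\mathbb{E}[\langle q,z(S)\rangle]=\frac1n\sum_{i\in R}\langle q,s_i\rangle=\frac{m}{n}\alpha .
\]
This is exactly the identity derived in that proof. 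Instantiating it with $n=n_s$ and $n=n_\ell$ yields the two equalities in the statement.

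\textbf{Step 2: the strict inequality.} Since $m\ge 1$ and $\alpha>0$, the quantity $m\alpha$ is strictly positive. The map $n\mapsto m\alpha/n$ is strictly decreasing on $n>0$, so $n_\ell>n_s$ gives $\frac{m}{n_s}\alpha>\frac{m}{n_\ell}\alpha$.

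\textbf{Step 3: interpretation.} If retrieval ranks candidates by the score $\langle q,z(S)\rangle$, or by cosine similarity with the norms treated as comparable, then the longer passage has strictly lower expected score. It is therefore systematically disadvantaged even though its relevant content is identical.

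The only point needing care is Step 3. The formal inequality concerns inner products, whereas retrievers usually compare cosine similarities, which are affected by $\|z(S)\|$. I would state the conclusion at the level of the inner-product score the model defines, rather than attempt a full cosine analysis. Optionally, I would remark that Lemma~\ref{lem:token-dilution}'s SNR bound shows the decrease is not merely a mean effect: the signal-to-noise ratio also falls as $1/n$.
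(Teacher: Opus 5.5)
Your proposal is correct and is essentially the paper's argument. The corollary is just the expectation identity $\mathbb{E}[\langle q, z(S)\rangle]=\frac{m}{n}\alpha$ from the proof of Lemma~\ref{lem:token-dilution}, evaluated at $n=n_s$ and $n=n_\ell$ and compared via the strict monotonicity of $n\mapsto m\alpha/n$ (using $m\ge 1$, $\alpha>0$); your remark that the conclusion concerns the inner-product score rather than cosine similarity is a fair caveat that the paper leaves implicit.
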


\paragraph{Implication.}
Lemma~\ref{lem:token-dilution} and Corollary~\ref{cor:length-bias}
formalize a ``token dilution'' effect:
when we embed entire passages, the representation of a fact is weakened by
irrelevant tokens, and the SNR decreases as $1/n$ with passage length.
Consequently, retrieval quality depends not only on what is said, but also on
how long and where it is written.

\medskip\noindent
\textbf{Entity--relation factorization.}
In our system, we instead represent knowledge as graph edges
\((e,r,e') \in \mathsf{M}\subseteq E\times R\times E\) and store 
embeddings for entities and relations via the codebook
\[
  E_{\mathrm{emb}}:E\to\mathbb{R}^d,
  \qquad
  R_{\mathrm{emb}}:R\to\mathbb{R}^d.
\]
Thus each fact $f=(e,r,e')$ is encoded by the triple
\[
  E_{\mathrm{emb}}(e),\; R_{\mathrm{emb}}(r),\; E_{\mathrm{emb}}(e'),
\]
derived from short token sequences for entity names and relation labels,
rather than from whole passages.

\begin{proposition}[Advantages of E--R--E embeddings]
\label{prop:ere-advantages}
Let $E,R,\mathsf{M},E_{\mathrm{emb}},R_{\mathrm{emb}}$ be as in
Section~\ref{sec:encoding}.
Under the averaging+noise model of Lemma~\ref{lem:token-dilution},
the following hold:
\begin{enumerate}
  \item (\emph{High-SNR micro-embeddings})
    There exist constants $c_1,c_2>0$, independent of the passage length $n$,
    such that for any fact $f=(e,r,e')\in\mathsf{M}$,
    \[
      c_1 \;\leq\; \mathrm{SNR}(v) \;\leq\; c_2
    \]
    \[
      \forall v \in \{E_{\mathrm{emb}}(e), R_{\mathrm{emb}}(r), E_{\mathrm{emb}}(e')\}.
    \]
    In particular, E--R--E embeddings do \emph{not} suffer the $1/n$ decay of
    Lemma~\ref{lem:token-dilution}.
  \item (\emph{Compositional query scoring})
    Let a query $q$ induce components $q_E,q_R \in \mathbb{R}^d$.
    For suitable nonnegative weights $\lambda_s,\lambda_r,\lambda_t$ we can score an edge
    $(e,r,e')\in\mathsf{M}$ via
    \[
      \mathrm{score}(e,r,e' \mid q)
      = \lambda_s \,\langle q_E, E_{\mathrm{emb}}(e)\rangle
      + 
    \]

    \[
    \lambda_r \,\langle q_R, R_{\mathrm{emb}}(r)\rangle
      + \lambda_t \,\langle q_E, E_{\mathrm{emb}}(e')\rangle,
    \]
    i.e., as an inner product between a structured query and short, high-SNR
    E--R--E embeddings, instead of a single inner product with a
    noisy passage embedding $z(S)$.
  \item (\emph{Localized interference and updates})
    Each fact $f$ has its own edge $(e,r,e')$; interference between facts arises only
    through shared $E_{\mathrm{emb}}(\cdot)$ or $R_{\mathrm{emb}}(\cdot)$.
    Updating a single fact changes $O(1)$ vectors instead of re-embedding entire
    passages containing many unrelated facts.
\end{enumerate}
\end{proposition}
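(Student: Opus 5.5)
The plan is to treat the three parts separately. Part 1 reuses the computation from the proof of Lemma~\ref{lem:token-dilution}. Parts 2 and 3 are essentially structural consequences of how the codebook $\mathcal{C}$ is defined in Section~\ref{sec:encoding}.

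For part 1, I apply the averaging+noise model to the short span that produces each vector $v\in\{E_{\mathrm{emb}}(e),R_{\mathrm{emb}}(r),E_{\mathrm{emb}}(e')\}$. This span is the entity name or relation label. Write $n_v$ for its token length and $m_v$ for its number of relevant tokens, with per-token alignment $\alpha_v>0$. The identity derived in the proof of Lemma~\ref{lem:token-dilution} gives
\[
\mathrm{SNR}(v)=\frac{m_v^2\alpha_v^2}{\sigma^2 n_v}.
\]
Two modelling assumptions are then needed, and I would state them explicitly since the proposition leaves them implicit:
\begin{itemize}
\item label lengths are bounded, $1\le n_v\le L_{\max}$, independently of any passage length $n$ (entity names and relation labels are short phrases produced by the parser);
\item for a label that names the fact, a constant fraction of its tokens is relevant, $m_v\ge 1$, and the alignment lies in a fixed interval, $\alpha_v\in[\alpha_{\min},\alpha_{\max}]$ with $\alpha_{\min}>0$.
\end{itemize}
With these, we can take $c_1=\alpha_{\min}^2/(\sigma^2 L_{\max})$ and $c_2=L_{\max}\alpha_{\max}^2/\sigma^2$, using $m_v\le n_v\le L_{\max}$. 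Neither constant involves $n$. Contrasting this with the $m^2\alpha^2/(\sigma^2 n)$ rate of Lemma~\ref{lem:token-dilution} gives the ``no $1/n$ decay'' remark.

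Part 2 is constructive. I define the score as the displayed linear form. Then I note that it equals a single inner product $\langle \tilde q,\tilde v\rangle$ in $\mathbb{R}^{3d}$, where $\tilde q=(\lambda_s q_E,\lambda_r q_R,\lambda_t q_E)$ and $\tilde v$ is the concatenation of the three E--R--E embeddings. Its expectation is a nonnegative combination of the component signals. Its noise variance is at most $(\lambda_s^2+\lambda_r^2+\lambda_t^2)\cdot\max_v \sigma^2/n_v$, assuming the three components are independent. Applying part 1 then shows the score's SNR is bounded below by a constant independent of $n$, whereas $\langle q,z(S)\rangle$ degrades as $1/n$.

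Part 3 is combinatorial. A fact is the element $(e,r,e')\in\mathsf{M}$, and its representation depends only on the three table entries it indexes. Two facts' scores can therefore covary only through a shared index, which gives the interference claim. For updates, \textsc{Indexify} modifies or adds one edge, and hence at most two entity vectors and one relation vector. This is $O(1)$, whereas a passage embedding $z(S)$ depends on all $n$ tokens and must be recomputed whenever any fact in it changes.

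The main obstacle is part 1. The lower bound $c_1$ is not a consequence of the model as stated; it requires the explicit assumptions of bounded label length and nonvanishing alignment. I would therefore fold these into the hypotheses, as described above, rather than attempt to derive them.
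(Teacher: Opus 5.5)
Your proposal is correct and follows the paper's proof. Part 1 reruns the Lemma~\ref{lem:token-dilution} computation with $n$ replaced by the short label length; the paper's sketch simply posits $n_E=O(1)$ and $m_E\approx n_E$ and leaves the alignment implicit, so your explicit hypotheses on $L_{\max}$, $m_v\ge 1$ and $\alpha_v\ge\alpha_{\min}$ are what make $c_1,c_2$ rigorous. Parts 2 and 3 are the same algebraic and locality observations, and your extra SNR bound for the composite score is fine provided you take $q_E,q_R$ to be unit vectors (or absorb their norms into the $\lambda$'s), since the noise model only fixes $\mathrm{Var}\langle u,\varepsilon_i\rangle=\sigma^2$ for unit $u$.
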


\begin{proof}[Proof sketch]
(1) Let the surface string for $e\in E$ have $n_E$ tokens, of which $m_E$ are
semantically relevant. By construction $n_E$ is bounded by a small constant
(e.g., $1$--$3$), so $m_E \approx n_E = O(1)$. Applying the same calculation
as in Lemma~\ref{lem:token-dilution} with $n=n_E$ gives
\[
  \mathrm{SNR}\big(E_{\mathrm{emb}}(e)\big)
  \propto \frac{m_E^2}{n_E}
  = \Theta(1),
\]
with constants independent of the passage length $n$ in which $f$ appears.
The same argument applies to $R_{\mathrm{emb}}(r)$ and $E_{\mathrm{emb}}(e')$,
yielding the claimed bounds $c_1,c_2$.

(2) Because we store $E_{\mathrm{emb}}(e)$, $R_{\mathrm{emb}}(r)$,
and $E_{\mathrm{emb}}(e')$ separately, any query $q$ that decomposes into
components $(q_E,q_R)$ admits the factorized score above.
Algebraically, this is a weighted sum of inner products between
short E--R--E vectors and corresponding query components, rather than a single
inner product $\langle q, z(S)\rangle$ with a length-dependent passage
embedding.

(3) The representation of a fact $f$ is the triple
$(E_{\mathrm{emb}}(e),R_{\mathrm{emb}}(r),E_{\mathrm{emb}}(e'))$.
Adding, removing, or modifying $f$ only affects these embeddings and
other edges sharing $e$, $r$, or $e'$.
No re-embedding of unrelated tokens is required, in contrast to
passage-level embeddings that entangle many facts in the same $z(S)$.
\end{proof}

\subsection{Retrieval Details}
\label{app:retrieve-details}

For completeness we summarize the exact scoring terms used in the coarse and fine stages.

\paragraph{Coarse score.}
An indexed run \(\mathbf{y}\) decodes to triples \(S(\mathbf{y})=\{(h,\rho,t)\}\subseteq\mathsf{M}\). We collect entity and relation embeddings into matrices
\(E(\mathbf{y})\in\mathbb{R}^{n_e\times d}\) and \(R(\mathbf{y})\in\mathbb{R}^{n_r\times d}\).
For a query \(q\) and candidate run \(f\),
\[
\begin{aligned}
s_{\text{coarse}}(q,f)
&= w_{\text{ent}}\max_{i,j}\cos\!\big(E(q)_i,E(f)_j\big)\\
&\quad + w_{\text{rel}}\max_{p,r}\cos\!\big(R(q)_p,R(f)_r\big),
\end{aligned}
\]
and we take \(I_k=\operatorname{TopK}_f\, s_{\text{coarse}}(q,f)\).

\paragraph{Fine score from triple lines.}
For each candidate \(f\in I_k\), we linearize triples to short lines “\(h~\rho~t\)”, embed query and candidate lines into
\(\mathbf{Q}\in\mathbb{R}^{n_q\times d}\) and \(\mathbf{C}\in\mathbb{R}^{n_c\times d}\), and form the cosine matrix
\[
\mathbf{S}=\widehat{\mathbf{Q}}\widehat{\mathbf{C}}^\top\in[-1,1]^{n_q\times n_c}.
\]
All fine-stage terms are computed on \(\mathbf{S}\):

\begin{itemize}
  \item \textbf{RelTopT:} flatten \(\mathbf{S}\), take the top-$t$ entries, and average.
  \item \textbf{Coverage:} \(\mathrm{Cov}(\tau_{\mathrm{cov}}) = \sum_i \mathbf{1}[\max_j S_{ij}\ge\tau_{\mathrm{cov}}]\).
  \item \textbf{Many-to-many (MP):} apply \(\sigma\big((S_{ij}-\tau_{\mathrm{pair}})/T_{\mathrm{pair}}\big)\) elementwise and normalize by \(\sqrt{n_q n_c}\) or \(\log(1+n_q n_c)\).
  \item \textbf{Distinct 1:1:} greedily select the largest unused entries above \(\tau_{\mathrm{dist}}\) and average them with a \(1a) Raw-text prompt/\sqrt{m}\) factor.
  \item \textbf{Whole-chunk gate:} compute a full-chunk cosine between concatenated query and candidate text, normalize by a length term, and gate with a sigmoid so very long but off-topic chunks do not get extra credit.
\end{itemize}

The final semantic score is a weighted sum
\[
\begin{aligned}
s_{\text{fine}} &= \mathrm{RelTopT}
+ \lambda_{\text{cov}}\mathrm{Cov}
+ \lambda_{\text{mp}}\mathrm{MP} \\
&\quad
+ \lambda_{\text{1:1}}\mathrm{Distinct}
+ \lambda_{\text{whole}}\mathrm{WholeGate}.
\end{aligned}
\]
with one set of weights and thresholds per dataset/model, reused across all experiments.

\subsection{Prompt Format and Input Encoding}
\label{app:format}

We encode each input as a compact, graph-structured prompt rather than a long text block. Concretely, a prompt consists of:

\begin{itemize}
  \item A codebook of entities and relations, $E'$ and $R'$ (either as words or short IDs).
  \item Edge sequences for the query, prior knowledge, and facts:
        query edges $\mathbf{q}$, knowledge edges $\mathbf{k}$, and fact edges $\mathbf{f}$.
  \item A short instruction block describing how to interpret each tuple \((h,\rho,t)\) or ID triple.
\end{itemize}

Figure~\ref{fig:input-encoding} contrasts a conventional raw-text prompt with our ID-based and word-based encodings. The ID variants use a JSON-style schema:

One of three formats depends on which one has fewer tokens.

\begin{figure}[H]
\centering

\begin{adjustbox}{scale=0.7}
\begin{minipage}{1.0\linewidth}

\begin{minipage}[t]{0.9\linewidth}
\small
\textbf{(a) Raw-text prompt}\\[-0.25em]
\begin{verbatim}
Q: Which subsidiaries acquired since 2021
are exposed to new EU rules?
Context:
- In 2022, AlphaCorp acquired BetaLtd...
- EU Regulation 2024/12 applies to...
- Post-merger reports indicate ...
(plus additional retrieved passages ...)
\end{verbatim}
\end{minipage}

\vspace{0.75em}

\begin{minipage}[t]{0.9\linewidth}
\small
\textbf{(b) ID-referenced codebook with edge matrix}\\[-0.25em]
\begin{verbatim}
{
"e": ["AlphaCorp","BetaLtd",
"EUReg2024_12","2021+"],
"r": ["acquired_in","exposed_to",
"subject_to"],
"edge_matrix": [[0,0,3],
[1,1,2],
[0,2,2]],
"questions(edges[i])":[0,1],
"facts(edges[i])": [0,2]
"rules":"<KB schema string>"
}
\end{verbatim}
\end{minipage}

\vspace{0.75em}

\begin{minipage}[t]{0.9\linewidth}
\small
\textbf{(c) ID-referenced compact triples}\\[-0.25em]
\begin{verbatim}
{
"e": ["AlphaCorp","BetaLtd",
"EUReg2024_12","2021+"],
"r": ["acquired_in","exposed_to",
"subject_to"],
"questions([[e,r,e], ...]):": [[0,0,3], [1,1,2]],
"facts([[e,r,e], ...]):": [[0,0,3], [0,2,2]],
"rules":"<KB schema string>"
}
\end{verbatim}
\end{minipage}

\vspace{0.75em}

\begin{minipage}[t]{0.9\linewidth}
\small
\textbf{(d) Word-level triples (no IDs)}\\[-0.25em]
\begin{verbatim}
{
"questions(words)": [[AlphaCorp,acquired_in,2021+],
[BetaLtd,exposed_to,EUReg2024_12]],
"facts(words)": [[AlphaCorp,acquired_in,2021+],
[AlphaCorp,subject_to,EUReg2024_12]],
"rules":"<KB schema string>"
}
\end{verbatim}
\end{minipage}

\end{minipage}
\end{adjustbox}

\caption{\textbf{AutoPrunedRetriever input encodings.}
Panel (a) shows a conventional long-context prompt; (b) encodes the same information
via an entity/relation codebook and an edge matrix; 
(c) uses explicit triple lists with IDs; (d) uses full-word triples.}
\label{fig:input-encoding}
\vspace{-0.5em}
\end{figure}

\begin{figure}[H]
\centering

\begin{adjustbox}{scale=0.7}
\begin{minipage}{1.0\linewidth}

\begin{minipage}[t]{0.9\linewidth}
\small
\textbf{(a) Edge-matrix JSON schema}\\[-0.25em]
\begin{verbatim}
    ---Knowledge Base---
    [JSON format]
    - e: list of entities (e[i] = entity string)
    - r: list of relations (r[j] = relation string)
    - edge_matrix: [[head_e_idx, r_idx, tail_e_idx]]
        * NOTE: edges[i] is just shorthand for edge_matrix[i]
    - questions(edges[i]): questions linked by edge i
    - given knowledge(edges[i]): prior answers linked by edge i
    - facts(edges[i]): facts linked by edge i
\end{verbatim}
\end{minipage}

\vspace{0.5em}

\begin{minipage}[t]{0.9\linewidth}
\small
\textbf{(b) ID-based triple JSON schema}\\[-0.25em]
\begin{verbatim}
    ---Knowledge Base---
    [JSON format]
    - e: list of entities (e[i] = entity string)
    - r: list of relations (r[j] = relation string)
    - [e,r,e]: triple [head_e_idx, r_idx, tail_e_idx]
    - questions([[e,r,e], ...]): question triples 
    - given knowledge([[e,r,e], ...]): prior answer triples
    - facts([[e,r,e], ...]): fact triples
\end{verbatim}
\end{minipage}

\vspace{0.5em}

\begin{minipage}[t]{0.9\linewidth}
\small
\textbf{(c) Word-level triple schema}\\[-0.25em]
\begin{verbatim}
    ---Knowledge Base---
    [JSON format]
    - questions(words): question triples 
    - given knowledge(words): prior answer triples
    - facts(words): fact triples
\end{verbatim}
\end{minipage}

\end{minipage}
\end{adjustbox}

\caption{\textbf{Knowledge-base JSON specifications (``rules'') used by AutoPrunedRetriever.}
The concrete encodings in Fig.~\ref{fig:input-encoding} all instantiate one of
these schemas.}
\label{fig:kb-rules}
\vspace{-0.5em}
\end{figure}

\begin{itemize}
  \item \texttt{e}: entity vocabulary (either strings or IDs).
  \item \texttt{r}: relation vocabulary.
  \item \texttt{edge matrix} or triple lists: \([h,r,t]\) indices into \texttt{e}/\texttt{r} or \texttt{E}/\texttt{R}.
  \item \texttt{questions}, \texttt{knowledge}, \texttt{facts}: subsets of edges tagged as questions, prior knowledge, or background facts.
\end{itemize}

\clearpage
\onecolumn
\subsection{Cross-domain qualitative comparison on STEM and TV}
\begin{table*}[ht!]
\centering
\footnotesize
\scalebox{0.80}{
\begin{tabular}{p{1.3cm} p{2.5cm} p{4.2cm} p{4.2cm} p{4.2cm}}
\toprule
\textbf{Domain / ID} & \textbf{Question (abridged)} & \textbf{AutoPrunedRetriever Result Analysis (abridged)} & \textbf{HippoRAG2 Result Analysis (abridged)} & \textbf{LightRAG Result Analysis (abridged)} \\
\midrule

\rowcolor{lightgray}
\textbf{STEM-5c755e96} & Brown bear size variation and adaptability &
\textit{Context:} “Kodiak bears are largest due to high salmon availability; inland bears smaller with limited resources.” 
\newline \textit{Reasoning:} geography $\rightarrow$ food abundance $\rightarrow$ body-mass shift $\rightarrow$ adaptability.
\newline \textit{Answer:} Larger coastal/Kodiak bears reflect rich caloric intake; smaller inland bears reflect scarcity $\Rightarrow$ size variance evidences environmental adaptability.
\newline \textit{Error:} — (correct) &
\textit{Context:} “The taxonomy of brown bears remains bewildering; multiple subspecies identified.”
\newline \textit{Reasoning:} taxonomy $\rightarrow$ morphological variation (no environmental cause).
\newline \textit{Answer:} Size variability indicates subspecies diversity.
\newline \textit{Error:} Misses causal driver (resources) $\Rightarrow$ descriptive, not mechanistic. &
\textit{Context:} “Brown bears vary greatly in size depending on where they live.”
\newline \textit{Reasoning:} region $\rightarrow$ size $\rightarrow$ adaptation (shallow).
\newline \textit{Answer:} Bears adapt to local conditions, so sizes differ.
\newline \textit{Error:} Correlation only; lacks resource/metabolic link. \\
\midrule

\textbf{STEM-4e26ae6d} & Historical range $\rightarrow$ ecological role &
\textit{Context:} “Mexican grizzly / Kodiak / Himalayan subspecies; apex predators affecting vegetation and prey.”
\newline \textit{Reasoning:} historical range $\rightarrow$ diversification $\rightarrow$ habitat adaptation $\rightarrow$ modern trophic role.
\newline \textit{Answer:} Past range shaped regional lineages whose adaptations underwrite today’s grizzly apex role.
\newline \textit{Error:} — (correct) &
\textit{Context:} “Pleistocene lineage prior to demise.”
\newline \textit{Reasoning:} lineage timeline $\rightarrow$ extinction (no present ecology).
\newline \textit{Answer:} Historical divergence explains current bears (vague).
\newline \textit{Error:} Lacks link to present-day ecological function. &
\textit{Context:} “Ecological dynamics of predator–prey systems.”
\newline \textit{Reasoning:} ecosystem complexity $\rightarrow$ generic role.
\newline \textit{Answer:} Grizzlies play roles in ecosystems (generic).
\newline \textit{Error:} No entity-level or causal path from range to role. \\
\midrule

\rowcolor{lightgray}
\textbf{STEM-1b8f5662} & Physical adaptations $\rightarrow$ hunting success (moose) &
\textit{Context:} “Charge and scent-based ambush tactics; terrain affects prey choice.”
\newline \textit{Reasoning:} morphology + habitat $\rightarrow$ tactic $\rightarrow$ success vs large prey.
\newline \textit{Answer:} Bears’ strength/claws + terrain-leveraged tactics raise success on moose.
\newline \textit{Error:} — (correct) &
\textit{Context:} “Brown bears as apex omnivores in ecosystems.”
\newline \textit{Reasoning:} apex predator $\rightarrow$ survival (no tactics).
\newline \textit{Answer:} As apex predators they can hunt large prey.
\newline \textit{Error:} Omits behavioral mechanism/tactical link. &
\textit{Context:} “Bears interact with diverse ecosystems.”
\newline \textit{Reasoning:} environment $\rightarrow$ adaptation (broad).
\newline \textit{Answer:} Environmental adaptation enables hunting.
\newline \textit{Error:} High-level summary; no tactic/terrain edge. \\
\midrule

\textbf{TV-29d2f5b1} & Caboose and Omega possession (Red vs Blue) &
\textit{Context:} “Caboose’s abnormal behavior linked to Omega possession and oxygen deprivation after suit reboot.”
\newline \textit{Reasoning:} possession + hypoxia $\rightarrow$ erratic acts $\rightarrow$ friendly-fire.
\newline \textit{Answer:} Caboose; accidents stem from AI control + hypoxia.
\newline \textit{Error:} — (correct) &
\textit{Context:} “Carolina’s body taken by Omega; personality changes.”
\newline \textit{Reasoning:} AI possession $\rightarrow$ behavior change (host misattributed).
\newline \textit{Answer:} Carolina behaves abnormally due to Omega.
\newline \textit{Error:} Entity confusion; temporal mismatch; misses hypoxia factor. &
\textit{Context:} “Omega AI causes aggression.”
\newline \textit{Reasoning:} AI influence $\rightarrow$ abnormal behavior (partial).
\newline \textit{Answer:} Omega explains erratic acts.
\newline \textit{Error:} Omits oxygen-deprivation component; partial causality. \\
\midrule

\rowcolor{lightgray}
\textbf{TV-33a6bd74} & Sarge’s Season 15 depression and redemption &
\textit{Context:} “Sarge creates fake enemies, betrays Reds/Blues, later saves them.”
\newline \textit{Reasoning:} depression $\rightarrow$ betrayal $\rightarrow$ redemption (temporal).
\newline \textit{Answer:} Depression triggers betrayal; later redemption by saving team.
\newline \textit{Error:} — (correct) &
\textit{Context:} “Sarge experiences burnout.”
\newline \textit{Reasoning:} depression $\rightarrow$ low morale (truncated).
\newline \textit{Answer:} Sarge acts poorly due to burnout.
\newline \textit{Error:} Misses betrayal–redemption arc. &
\textit{Context:} “Acts irrationally after long missions.”
\newline \textit{Reasoning:} fatigue $\rightarrow$ misbehavior (truncated).
\newline \textit{Answer:} Irrational actions due to fatigue.
\newline \textit{Error:} Omits betrayal + later change-of-heart. \\
\midrule

\textbf{TV-1e87f0a2} & The Simpsons – Yes Guy / Wiseguy meta humor &
\textit{Context:} “Yes Guy’s ‘Ye-e-e-s?!’ explained by ‘I had a stro-o-o-oke’; Wiseguy labeled stereotype.”
\newline \textit{Reasoning:} stroke gag $\rightarrow$ speech quirk $\rightarrow$ meta-reference.
\newline \textit{Answer:} The gag is justified in-universe; Wiseguy isn’t a fixed name.
\newline \textit{Error:} — (correct) &
\textit{Context:} “Running joke across episodes.”
\newline \textit{Reasoning:} repetition $\rightarrow$ humor (no causal quote).
\newline \textit{Answer:} It’s a recurring joke.
\newline \textit{Error:} Lacks textual evidence explaining the quirk. &
\textit{Context:} “Minor characters recurring jokes.”
\newline \textit{Reasoning:} trope repetition $\rightarrow$ humor (generic).
\newline \textit{Answer:} Recurring jokes create humor.
\newline \textit{Error:} Descriptive only; misses explicit line/second part. \\
\bottomrule
\end{tabular}
}
\caption{Cross-domain qualitative comparison on STEM and TV questions. Each model column includes its retrieved context, reconstructed reasoning chain, final answer, and an error note (if any). AutoPrunedRetriever preserves minimal but functional causal paths; HippoRAG2 tends to over-expand associatively; LightRAG collapses to topic-level summaries.}
\label{tab:qualitative_fullbundle}
\end{table*}

\clearpage
\twocolumn

\end{document}